\documentclass[twocolumn]{autart}    
\usepackage{cite}
\usepackage{amsmath,amssymb,amsfonts,mathtools}
\usepackage{algorithm,algorithmic}

\usepackage{graphicx,subcaption}

\usepackage{tikz,tikz-3dplot}
\usetikzlibrary{arrows,arrows.meta,shapes,chains,calc,positioning}
\usepackage{threeparttable, makecell, booktabs, multirow, array}

\newtheorem{definition}{Definition}
\newtheorem{problem}{Problem}
\newtheorem{assumption}{Assumption}
\newtheorem{theorem}{Theorem}
\newtheorem{lemma}{Lemma}

\newtheorem{remark}{Remark}

\DeclareMathOperator{\diag}{diag}
\DeclareMathOperator{\SO}{SO}

\usepackage{color}

\AtBeginDocument{\setlength{\parskip}{0.5ex}} 
\AtBeginDocument{%
    \setlength\abovedisplayskip{6pt plus 2pt minus 2pt}%
    \setlength\belowdisplayskip{6pt plus 2pt minus 2pt}%
    \setlength\abovedisplayshortskip{0pt plus 2pt}%
    \setlength\belowdisplayshortskip{3pt plus 1pt minus 1pt}%
}

\begin{document}

\begin{frontmatter}

\title{Laplacian Spectral Shaping for Non-Uniform Scaling Formation Control of Open Multi-Agent Systems\thanksref{footnoteinfo}} 

\thanks[footnoteinfo]{A preliminary version of this work \cite{He2026} has been accepted for presentation at the IFAC World Congress 2026, where only agent joining as a follower was addressed. In contrast, this work contributes a unified framework covering all four elementary topology changes—agent joining, edge addition, agent leaving, and edge removal—and supports leader reassignment. Corresponding author Gangshan Jing.}

\author[Chongqing]{Tao He}\ead{20231301010@stu.cqu.edu.cn},    
\author[Chongqing]{Gangshan Jing}\ead{jinggangshan@cqu.edu.cn}              

\address[Chongqing]{School of Automation, Chongqing University, 
Chongqing, 400044, PRC}

\begin{keyword}                           
Multi-agent systems, formation control, spectral shaping, matrix-valued Laplacian. 
\end{keyword}                             

\begin{abstract}                          
Non-uniform scaling control enables a multi-agent formation to adjust its shape by compressing or stretching independently along different coordinate axes through inter-agent interactions, offering high flexibility in complex environments. The fundamental idea is encoding the desired formation shape as the kernel of a matrix-valued Laplacian. In open multi-agent systems, however, changes in number of agents, number of edges, and leader selection dynamically alter this Laplacian, destroying the required spectral properties—positive semidefiniteness, correct kernel, and positive definiteness of the follower block (we summarize these properties as the formation spectrum). In this paper, we develop distributed protocols to strategically adjust partial weights of the Laplacian matrix for formation control in arbitrary dimensional space. By implementing the protocols, the desired formation spectrum can be preserved under dynamic topology changes including agent joining, edge addition, agent leaving, and edge removal, while any pair of agents can serve as leaders. Unlike existing Laplacian design methods for affine formation control under topology changes, the proposed approach requires a sparser sensing graph, avoids a predefined parent-child hierarchical structure, and supports leader reassignment. The effectiveness of the proposed protocols is validated through both theoretical analysis and numerical simulations.
\end{abstract}

\end{frontmatter}

\section{Introduction}
Distributed formation maneuver control enables collective shape transformations using only inter-agent local sensing, and has underpinned a wide range of applications, including high-speed cooperative payload transport \cite{Sun2025}, navigation through densely cluttered spaces \cite{Zhou2022}, and self-organizing heterogeneous robot teams for complex missions \cite{Zhu2024}. Critically, the realization of such distributed cooperative control inherently depends on the underlying sensing graph.

In realistic deployments, however, both edges and nodes of the sensing graph may change dynamically \cite{Pelin2025}. On the edge side, obstacle avoidance continuously perturbs inter-agent geometry, and sensing links may be intermittently blocked by occlusions or limited range, forcing edge removal or reconfiguration. Meanwhile, strategically adding new edges can accelerate synchronization convergence \cite{Cao2025} while simultaneously providing redundant information pathways to enhance robustness \cite{Trinh2020}. On the node side, agents may join to augment mission capability or leave due to failure or energy depletion, resulting in a fully dynamic vertex set \cite{Franceschelli2021,Jia2026}.These scenarios give rise to an \emph{open multi-agent system}, where the graph is neither static nor confined to predefined switching patterns \cite{Bo2008,OH2015}, but may undergo arbitrary edge addition/removal and agent joining/leaving events (we refer to this as fully dynamic topology changes).

Existing formation maneuver control methods can be divided into two paradigms, as systematically compared in Table~\ref{tab:formation-comparison}. The first uses nonlinear constraints such as distances \cite{Yu2006,Trinh2024}, bearings \cite{Zhao2016, Karimian2017,Trinh2020,Zhang2024}, or distance ratios/angles \cite{Jing2019,Chen2021,Buckley2021,Cao2020}. Built on minimally rigid graphs via Henneberg constructions, these methods naturally support agent joining and leaving, but they do not handle edge addition or removal, and their nonlinear controllers complicate stability analysis.

\begin{table*}[!thbp]
\scriptsize
\centering
\setlength{\tabcolsep}{4pt}
\begin{threeparttable}
\caption{Comparison of Formation Maneuver Control Methods in $\mathbb{R}^d$}
\label{tab:formation-comparison}
\begin{tabular}{l|c|c|c}
\hline
\textbf{Method} & \textbf{Maneuverability} (DoF) & \multicolumn{2}{c}{\textbf{Graph}\tnote{a}} \\
\hline
Position-based \cite{OH2015} & None ($0$) & Star & Static \\
Displacement-based \cite{Zhu2019,Cao2025} & Translation (d) & Connected & Fully dynamic \\
Distance-based \cite{Yu2006,Trinh2024} & Rotation-Translation ($\binom{d}{2}+d$) & Laman & Partially dynamic \\
Bearing-based \cite{Zhao2016, Karimian2017,Trinh2020,Zhang2024} & Uniform Scaling-Translation (d+1)  & Laman & Partially dynamic \\
Angle/Distance-ratio-based \cite{Jing2019,Chen2021,Buckley2021,Cao2020} & Uniform Scaling-Rotation-Translation ($\binom{d}{2}+d+1$) & Laman & Partially dynamic \\
Complex-Laplacian-based \cite{Lin2014,Hector2021,Fang2024} & Uniform Scaling-Rotation-Translation (4 in $\mathbb{R}^2$) & 2-rooted & Static \\
Affine Formation \cite{Lin2016,Zhao2018,Li2025} & Affine $(d^2+d)$ & (d+1)-lateration & Partially dynamic \\
Our Work & Non-uniform Scaling-Translation (2d)  & 2-vertex-connected & Fully dynamic \\
\hline
\end{tabular}

\begin{tablenotes}[flushleft]
\scriptsize
\item[a] Fully dynamic: supports all four primitive topology changes—agent joining, agent leaving, edge addition, and edge removal. Partially dynamic: supports only a subset of the primitive topology changes. Static: no changes are supported.
\end{tablenotes}
\end{threeparttable}
\end{table*}

The linear Laplacian-based paradigm encodes the target formation in the Laplacian null space, yielding linear controllers and facilitating global convergence analysis. Within this paradigm, displacement-based methods \cite{Zhu2019,Cao2025} require only a connected graph and allow independent edge-weight adjustments; consequently, they support fully dynamic topology changes, but achieve only translation ($d$ degrees of freedom, DoF). Complex-Laplacian-based methods \cite{Lin2014,Hector2021,Fang2024} provide the additional ability to uniformly scale the formation, thereby enhancing maneuverability. However, this isotropic scaling mechanism restricts their adaptability to anisotropic environments. Consequently, they may compromise safety or efficiency by reducing scale in directions where it is unnecessary. This motivates the need for non-uniform scaling (independent scaling along each axis), which is supported by both affine formation methods \cite{Lin2016,Zhao2018,Li2025} and the matrix-valued constraint framework \cite{he2025}. However, the former, which handles topology changes \cite{Li2025}, relies on a predefined hierarchical $(d+1)$-rooted graph structure, lacks mechanisms for edge addition, and offers no structural flexibility to handle leader faults. The latter achieves non-uniform scaling under merely $2$-rooted graphs with two leaders, but lacks mechanisms for dynamic topology changes, requires solving a centralized high-dimensional inverse eigenvalue problem for global convergence, and fixes the leader set a priori. 

This paper addresses these limitations by developing a fully distributed framework for non-uniform scaling formation maneuver control that operates under arbitrary topology changes. The main contributions are summarized as follows. (i) We introduce a unified algebraic representation for the Laplacian dynamics of open multi-agent systems. Based on this formulation, we develop distributed update laws that ensure maintainance of the required spectral property of the Laplacian during topology changes, including agent joining, agent leaving, edge addition, and edge removal. (ii) We adopt a 2-vertex-connected graph structure to enable non-uniform scaling formation maneuver control in arbitrary dimensional space. In contrast to existing formation control methods \cite{Lin2014,Zhao2018,Hector2021,Fang2024,Li2025} that require a predetermined and immutable leader set, the proposed framework supports dynamic leader reassignment. Moreover, compared with existing affine formation frameworks under topology changes (e.g., \cite{Li2025}), the proposed approach requires a sparser sensing graph, eliminates the need for predefined parent–child hierarchical structure, and provides a characterization of minimal compensatory edge sets for restoring $2$-vertex-connectivity.


The remainder of this paper is organized as follows. Section~\ref{sec:preliminaries} introduces mathematical background and fundamentals of formation control. Section~\ref{sec:problem} states the problem. Section~\ref{sec:topology} details the spectral property maintenance strategy under topology changes. Section~\ref{sec:simulations} presents numerical results and Section~\ref{sec:conclusion}  concludes the paper and discusses future work.

\section{Preliminaries}\label{sec:preliminaries}
\subsection{Notation and Graph Theory}
Throughout this paper, $\mathbb{R}$ denotes the field of real numbers. $\mathbb{R}^d$ signifies the $d$-dimensional Euclidean space. The identity matrix is $ I_n \in \mathbb{R}^{n \times n} $, the all-ones vector is $ 1_n \in \mathbb{R}^n $, the zero tensor (scalar/vector/matrix) with context-appropriate dimensions is $ 0 $, the Kronecker product is $ \otimes $, and the symbols $ \succeq $ and $ \succ $ denote positive semi-definiteness and positive definiteness for matrices, respectively. The Special Orthogonal group is $\SO(d) = \{ R \in \mathbb{R}^{d \times d} \mid R^\top R = I_d, \det(R) = 1 \}$.

We model the interaction network of a multi-agent system in $\mathbb{R}^d$ as an undirected graph $G= (V, E)$. The vertex set $V=\{1,\dots,n\}$ represents agents. The edge set $E\subseteq \{ \{u, v\} \mid u, v \in V, u \neq v \}$ captures both sensing and communication links: an edge \( \{u, v\} \in E\) indicates that agents \(u\) and \(v\) can sense or exchange information with each other. The neighborhood of agent $i$ is defined as $N_i=\{\, j \in V : \{i, j\} \in E \,\}$. 

A matrix $L = [L_{ij}] \in \mathbb{R}^{dn \times dn}$ is called a \emph{matrix-valued Laplacian} associated with an undirected graph $G$ with $n$ agents in $\mathbb{R}^d$ if it satisfies $\sum_{j=1}^n L_{ij} = 0$ for all $i = 1, \dots, n$, where for $i \neq j$, each block $L_{ij} \in \mathbb{R}^{d \times d}$ (with entries possibly positive, negative, or zero) represents the matrix weight of pair $\{i,j\}$ (nonzero if $\{i,j\} \in E$, and zero otherwise), and satisfies $L_{ij} = L_{ji}^\top$.

For a graph $G$, let \(G - v\) (resp. \(G - e\))  be the subgraph of \(G\) obtained by deleting a vertex \(v\) and all its incident edges (resp. deleting an edge \(e\)).

\begin{definition}[2-vertex-connected \cite{diestel2017}]
\label{def:2vc}
A graph $G = (V,E)$ is 2-vertex-connected if $G - x$ is connected for all $x \in V$.
\end{definition}

\begin{definition}[2-rooted \cite{Lin2014}]
\label{def:2rooted}
A graph $G = (V,E)$ is called 2-rooted if there exist distinct vertices $r_1,r_2\in V$, called the roots, such that for every $w \in V \setminus \{r_1,r_2\}$ and every $x \in V \setminus \{w\}$, there exists a path from $\{r_1,r_2\}$ to $w$ in $G - x$.
\end{definition}

\begin{lemma}\label{lem:2vc}
If $G = (V,E)$ is 2-rooted with roots $\{r_1,r_2\}$, then $G' = (V, E')$ with $E' = E\cup\{\{r_1,r_2\}\}$ is 2-vertex-connected.
\end{lemma}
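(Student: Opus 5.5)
We argue directly from Definition~\ref{def:2vc}: fix an arbitrary vertex $x \in V$ and show that $G' - x$ is connected. The key observation is that adding the edge $\{r_1,r_2\}$ makes the two roots behave as a single connected ``source'' set. In $G$ alone they act only as a two-element source set from which every other vertex is reachable. We split into two cases according to whether $x$ is a root.

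\textbf{Case 1: $x \notin \{r_1,r_2\}$.} Both roots survive in $G'-x$, and they are adjacent via the added edge, so $\{r_1,r_2\}$ lies in a single component of $G'-x$. Now take any $w \in V\setminus\{r_1,r_2,x\}$. Since $x \neq w$, Definition~\ref{def:2rooted} supplies a path in $G-x$ from some root to $w$. This path also lies in $G'-x$, so $w$ belongs to the component containing the roots. Hence $G'-x$ is connected.

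\textbf{Case 2: $x = r_1$ (the case $x=r_2$ is symmetric).} Take any $w \in V\setminus\{r_1,r_2\}$. Applying Definition~\ref{def:2rooted} with this $w$ and $x=r_1\neq w$ gives a path in $G-r_1$ from $\{r_1,r_2\}$ to $w$. Since $r_1$ is deleted, the path must start at $r_2$. Thus every surviving vertex is joined to $r_2$ in $G-r_1 \subseteq G'-r_1$, so $G'-r_1$ is connected.

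The only delicate point is reading ``a path from $\{r_1,r_2\}$ to $w$ in $G-x$'' correctly: when $x$ is a root, the path cannot start at the deleted root. This is exactly what makes Case~2 work without the new edge, while Case~1 is where the edge $\{r_1,r_2\}$ is genuinely needed.

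We would also dispatch the degenerate situations briefly. If $V=\{r_1,r_2\}$, then $G'-x$ is a single vertex and is trivially connected. If $|V|\ge 3$, the cases above cover every $x$.
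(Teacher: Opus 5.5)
Your proof is correct and follows essentially the same route as the paper. Both arguments split into the case $x\notin\{r_1,r_2\}$, where the added edge keeps the roots in one component and the 2-rooted property reaches every other vertex, and the case where $x$ is a root, where every surviving vertex is reached from the other root in $G-x$. Your explicit treatment of the degenerate case $V=\{r_1,r_2\}$ is a harmless addition that the paper leaves implicit.
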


\begin{pf}
We prove this by two cases. \emph{Case 1: $x \notin \{r_1,r_2\}$.} 
Since edge $\{r_1,r_2\} \in E'$, $r_1$ and $r_2$ remain connected in $G'-x$. For any $w \in V \setminus \{r_1,r_2,x\}$, the 2-rooted property of $G$ guarantees a path from $\{r_1, r_2\}$ to $w$ in $G'-x$. Thus $G'-x$ is connected. \emph{Case 2: $x = r_1$ (symmetric for $x = r_2$).} For any $w \in V \setminus \{r_1, r_2\}$, the 2-rooted property of $G$ gives a path from $r_2$ to $w$ in $G-r_1$. Hence $G'-x$ is connected.
\end{pf}

\begin{lemma}\label{lem:2vc_2root}
If $G = (V,E)$ is 2-vertex-connected, then $G$ is 2-rooted, and any pair of distinct vertices \( \{u, v\} \subseteq V \) can serve as roots.
\end{lemma}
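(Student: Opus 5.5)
Fix an arbitrary pair of distinct vertices $u,v\in V$. We will show that $G$ satisfies Definition~\ref{def:2rooted} with roots $r_1=u$ and $r_2=v$. Since the pair is arbitrary, this proves both claims at once: $G$ is 2-rooted, and any pair of distinct vertices can serve as roots.

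Take any $w\in V\setminus\{u,v\}$ and any $x\in V\setminus\{w\}$. We must exhibit a path in $G-x$ from some vertex of $\{u,v\}$ to $w$. The argument uses only Definition~\ref{def:2vc}: for every $x\in V$, the graph $G-x$ is connected.

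We split into two cases according to where $x$ lies.
\begin{itemize}
\item \emph{Case 1: $x\notin\{u,v\}$.} Both $u$ and $w$ are vertices of $G-x$, since $w\neq x$ and $u\neq x$. Because $G-x$ is connected, it contains a path from $u$ to $w$, which is a path from $\{u,v\}$ to $w$.
\item \emph{Case 2: $x\in\{u,v\}$.} By symmetry take $x=u$. Then $v$ and $w$ both survive in $G-u$, since $v\neq u$ and $w\neq u$. Connectivity of $G-u$ gives a path from $v$ to $w$ avoiding $u$, as required.
\end{itemize}

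There is no real obstacle here; the only care needed is bookkeeping. In each case we must check that at least one root is not the deleted vertex $x$. This holds because $u\neq v$, so $x$ can equal at most one of them. We must also check that $w$ itself survives the deletion, which holds because $x\neq w$ by hypothesis.

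One degenerate situation deserves a remark. If $|V|\le 2$, the set $V\setminus\{u,v\}$ is empty, so the condition in Definition~\ref{def:2rooted} holds vacuously. In that case existence of a pair of roots only requires $|V|\ge 2$. We will note that the statement implicitly assumes $|V|\ge2$, which is needed for a pair of distinct roots to exist at all.
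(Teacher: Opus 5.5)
Your proof is correct and follows the paper's argument: fix an arbitrary pair $\{u,v\}$, note that $x$ can delete at most one root while $w\neq x$ survives, and use connectivity of $G-x$ to join a surviving root to $w$. Your explicit case split and the remark on the degenerate case $|V|\le 2$ spell out bookkeeping that the paper compresses into one sentence.
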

\begin{pf}
Fix any pair \(\{u,v\}\subseteq V\). For any \(w\in V\setminus\{u,v\}\) and any \(x\in V\setminus\{w\}\), at least one of \(u,v\) lies in \(G-x\); the connectivity of \(G-x\) then yields a path from that root to \(w\) in \(G-x\). Thus $G$ is 2-rooted with roots \(\{u,v\}\).
\end{pf}

\subsection{Formation and Desired Shape}
A configuration of the agents in $V$ is denoted as $(p, R)$, where $p = [p_1^\top, \dots, p_n^\top]^\top \in \mathbb{R}^{dn}$ is the stacked vector of agent positions, and $R \in \SO(d)$ defines the axes along which non-uniform scaling transformation is applied. Consequently, a formation in $\mathbb{R}^d$ is fully characterized by the tuple $(G, p, R)$. Let $(\tilde{p}, R)$ be a nominal configuration, where $\tilde{p}= [\tilde{p}_1^\top, \dots, \tilde{p}_n^\top]^\top \in \mathbb{R}^{dn}$ is the nominal position vector that defines the reference shape of the formation. The desired shape manifold under non-uniform scaling is
\begin{equation} \label{eq_dss}
    \varPi(\tilde p,R) \coloneqq \left\{ p = \left(I_n \otimes S(R)  \right) \tilde{p} + 1_n \otimes \tau \mid s, \tau \in \mathbb{R}^d \right\},
\end{equation}
where $S(R)=R \diag(s) R^{\top}$, $s \in \mathbb{R}^d$ and $\tau \in \mathbb{R}^d$ denote the scaling and translation maneuver parameters, respectively.

This formulation generalizes uniform scaling by allowing independent compression or stretching along orthogonal directions defined by $R \in \SO(d)$. As illustrated in Fig.~\ref{fig:scaling_motivation}, in narrow passages, the formation can compress along the restricted direction while maintaining adequate spacing in others, avoiding the excessive proximity and inefficient maneuvers associated with uniform scaling approaches \cite{Lin2014, Jing2019, Trinh2020}.

The following lemma characterizes the dimension of the desired shape manifold.
\begin{lemma}[\cite{he2025}]
    \label{lem:dimVARPi}
    For a nominal formation $(G, \tilde{p}, R)$, $\dim(\varPi(\tilde{p},R)) = 2d $ if and only if for each $l = 1,\dots,d$, the set $\{\tilde{p}^l_{i,R}\}_{i \in V}$ is not a singleton, where $\tilde{p}^l_{i,R}$ denotes the $l$-th coordinate of $R^{\top}\tilde{p}_i$.
\end{lemma}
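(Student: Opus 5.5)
The plan is to treat $\varPi(\tilde p,R)$ as the image of a linear map and compute its rank. Define $\Phi:\mathbb{R}^d\times\mathbb{R}^d\to\mathbb{R}^{dn}$ by
\[
\Phi(s,\tau)=\big(I_n\otimes R\diag(s)R^\top\big)\tilde p+1_n\otimes\tau .
\]
This map is linear in $(s,\tau)$, so $\varPi(\tilde p,R)=\operatorname{im}\Phi$ is a subspace. Hence $\dim\varPi=2d$ if and only if $\Phi$ is injective, i.e.\ $\ker\Phi=\{0\}$.

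Next, change coordinates to decouple the axes. Multiplying by the orthogonal matrix $I_n\otimes R^\top$ does not change rank, and the $i$-th block of $(I_n\otimes R^\top)\Phi(s,\tau)$ is
\[
\diag(s)R^\top\tilde p_i+R^\top\tau .
\]
Write $\tilde q_i=R^\top\tilde p_i$, whose $l$-th coordinate is $\tilde p^l_{i,R}$, and $\sigma=R^\top\tau$. The map $\tau\mapsto\sigma$ is a bijection. The $l$-th coordinate of block $i$ is then $s_l\tilde p^l_{i,R}+\sigma_l$. So $\Phi(s,\tau)=0$ if and only if, for every $l$,
\[
s_l\tilde p^l_{i,R}+\sigma_l=0\quad\text{for all } i\in V.
\]
The problem therefore splits into $d$ independent two-variable problems, one for each pair $(s_l,\sigma_l)$.

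Fix $l$. The homogeneous system in $(s_l,\sigma_l)$ has coefficient matrix $[\,a\;\;1_n\,]$, where $a=(\tilde p^l_{1,R},\dots,\tilde p^l_{n,R})^\top$. This matrix has rank $2$ exactly when $a$ is not a multiple of $1_n$, that is, when the set $\{\tilde p^l_{i,R}\}_{i\in V}$ is not a singleton. In that case the only solution is $s_l=\sigma_l=0$.

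If instead $a=c1_n$ for some $c$, then $(s_l,\sigma_l)=(1,-c)$ is a nonzero kernel element. Consequently the rank of $\Phi$ equals $\sum_l \operatorname{rank}[\,a^{(l)}\;1_n\,]$, and this sum equals $2d$ if and only if every axis $l$ is non-degenerate. This proves both directions of the lemma.

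There is no real obstacle in this argument. The only point needing care is to state explicitly that the manifold is the linear span (image) of $\Phi$, so that its dimension means the linear dimension. One should also note that the orthogonal change of coordinates preserves this dimension.
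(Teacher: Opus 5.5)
Your proof is correct and complete. The paper does not prove this lemma itself; it imports it from \cite{he2025}. So there is no in-paper argument to measure yours against.

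Your route is the natural self-contained verification:
\begin{enumerate}
\item[(a)] observe that $(s,\tau)\mapsto (I_n\otimes R\diag(s)R^\top)\tilde p+1_n\otimes\tau$ is linear, so $\varPi(\tilde p,R)$ is its image;
\item[(b)] pass to the rotated frame via the invertible maps $I_n\otimes R^\top$ and $\tau\mapsto R^\top\tau$;
\item[(c)] reduce to the ranks of the $n\times 2$ matrices $[\,a^{(l)}\;\;1_n\,]$.
\end{enumerate}

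It actually yields slightly more than the lemma, namely
\[
\dim\varPi(\tilde p,R)=d+\#\bigl\{l:\ \{\tilde p^l_{i,R}\}_{i\in V}\ \text{is not a singleton}\bigr\}.
\]
It also makes transparent why Assumption~\ref{ass:graph} gives $\dim\varPi=2d$ for every subconfiguration of at least two agents. That is exactly how the lemma is invoked in the proof of Lemma~\ref{lem:2vc_sp} and in the remark before Assumption~\ref{ass:graph}.

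By contrast, in step (3) of the appendix the paper gets $\dim\varPi(\tilde p^+,R)=2d$ by a different, inheritance-type argument. It takes the upper bound from the $2d$ parameters and the lower bound from the surjective projection onto $\varPi(\tilde p,R)$. That argument presupposes the dimension count for the smaller configuration rather than establishing the characterization itself, which your proof does directly.
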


\begin{figure}[t]
	\centering
	\begin{tikzpicture}[scale=0.7]
	
	\begin{scope}

	\coordinate (A1) at (0,0);
	\coordinate (A2) at (1,0);
	\coordinate (A3) at (1,1);
	\coordinate (A4) at (0,1);
	
	\draw[dashed, thick] (A1) -- (A2) -- (A3) -- (A4) -- cycle;
	
	\foreach \point in {A1,A2,A3,A4} {
		\draw[fill=black] (\point) circle (0.06);
	}
	
	\draw[->, thick, olive, dashed, bend left=40] (1,1) to (4.1,1.5);
	
	\draw[fill=gray!30, line width=1pt] (3.5,0) -- (3.5,1.5) -- (3.7,1.5) -- (3.7,0) -- cycle;
	\draw[fill=gray!30, line width=1pt] (4.3,1) -- (4.3,2.5) -- (4.5,2.5) -- (4.5,1) -- cycle;
	\node[right] at (3,2.7) {Narrow Passage};

	\coordinate (C1) at (3.8,0.5);
	\coordinate (C2) at (4.2,0.5);
	\coordinate (C3) at (4.2,1.5);
	\coordinate (C4) at (3.8,1.5);
	
	\draw[dashed, thick, olive] (C1) -- (C2) -- (C3) -- (C4) -- cycle;
	
	\foreach \point in {C1,C2,C3,C4} {
		\draw[fill=olive] (\point) circle (0.06);
	}
	
	\draw[fill=yellow!30] (6.5+1,0.2) -- (8.5+1,0.2) -- (8.5+1,1.5) -- (6.5+1,1.5) -- cycle;
	\node[below] at (8,0) {Restricted Zone};
	
	\draw[->, thick, blue, dashed, bend right=15] (4.2,0.2) to (6+1,0.5);
	
	\coordinate (B1) at (6.1+1,0);
	\coordinate (B2) at (7.1+1,0);
	\coordinate (B3) at (7.1+1,1.7);
	\coordinate (B4) at (6.1+1,1.7);
	
	\draw[dashed, thick, blue] (B1) -- (B2) -- (B3) -- (B4) -- cycle;
	
	\foreach \point in {B1,B2,B3,B4} {
		\draw[fill=blue] (\point) circle (0.06);
	}
	
	\draw[->, thick] (0,0) -- (9.8,0) node[right] {$x$};
	\draw[->, thick] (0,0) -- (0,2.5) node[above] {$y$};
	
	\node[below] at (0.5,0) {Original};
	\node[below, olive] at (2.5,2.5) {(1) Scale $x$};
	\node[above, blue] at (5.5,0.3) {(2) Scale $y$};
	\end{scope}
	
	\node[above] at (4.5,-1.5) {(a) $R = I_2$};
		
	\begin{scope}[shift={(0,-5)}]
	
	\draw[->, thick] (0,0) -- (3,3) node[right] {$y'$};
	\draw[->, thick] (0,0) -- (1,-1) node[right] {$x'$};
	
	\coordinate (A1) at (0,0);
	\coordinate (A2) at (1,0);
	\coordinate (A3) at (1,1);
	\coordinate (A4) at (0,1);
	
	\draw[dashed, thick] (A1) -- (A2) -- (A3) -- (A4) -- cycle;
	
	\foreach \point in {A1,A2,A3,A4} {
		\draw[fill=black] (\point) circle (0.06);
	}
	
	\draw[fill=gray!30, line width=1pt] (2.5,1.0) -- (3.5,2.0) -- (3.7,1.8) -- (2.7,0.8) -- cycle;
	\draw[fill=gray!30, line width=1pt] (3.8,0.5) -- (4.8,1.5) -- (5.0,1.3) -- (4.0,0.3) -- cycle;

	\coordinate (D1) at (0.6+3.0,1.0); 
	\coordinate (D2) at (0.6+3.75,1.25);
	\coordinate (D3) at (0.6+4.0,2.0);
	\coordinate (D4) at (0.6+3.25,1.75);
	
	\draw[dashed, thick, orange] (D1) -- (D2) -- (D3) -- (D4) -- cycle;
	
	\foreach \point in {D1,D2,D3,D4} {
		\draw[fill=orange] (\point) circle (0.06);
	}
	
	\draw[->, thick, orange, dashed, bend right=20] (1,0.5) to (3.5,0.9);
	
	\draw[fill=gray!30, line width=1pt] (5.0+2,1.5) -- (6.0+2,0.5) -- (6.2+2,0.7) -- (5.2+2,1.7) -- cycle;
	\draw[fill=gray!30, line width=1pt] (5.3+3,2) -- (6.3+3,1) -- (6.5+3,1.2) -- (5.5+3,2.2) -- cycle;
	\node[below] at (6.1,0.5) {Narrow Passage};
	
	\coordinate (E1) at (2.6+5.5,1.5-0.5);   
	\coordinate (E2) at (2.6+6.25,1.25-0.5); 
	\coordinate (E3) at (2.6+6.0,2.0-0.5);   
	\coordinate (E4) at (2.6+5.25,2.25-0.5); 
	
	\draw[dashed, thick, purple] (E1) -- (E2) -- (E3) -- (E4) -- cycle;
	
	\foreach \point in {E1,E2,E3,E4} {
		\draw[fill=purple] (\point) circle (0.06);
	}
	
	\draw[->, thick, purple, dashed, bend left=30] (4.75,1.75) to (7.8,2);
	
	\node[above] at (0.5,1.2) {Original};
	\node[below, orange] at (2.5,0.3) {(3) Scale $x'$};
	\node[below, purple] at (6.5,3) {(4) Scale $y'$};
	\node[above] at (4.5,-2.5) {(b) $R' = \begin{bmatrix} \cos 45^\circ & \sin 45^\circ \\ -\sin 45^\circ & \cos 45^\circ \end{bmatrix}$};
	\end{scope}
	
	\end{tikzpicture}
	\caption{Non-uniform scaling transformations of the formation for obstacle avoidance along the axes defined by $R$ and $R'$, respectively.}
	\label{fig:scaling_motivation}
\end{figure}

\subsection{Agent Dynamics and Shape Stabilization}
Each agent \(i \in V\) is governed by single-integrator dynamics
\begin{equation}\label{eq_dy}
    \dot{p}_i = u_i,
\end{equation}
where \(p_i \in \mathbb{R}^d\) is its position and \(u_i \in \mathbb{R}^d\) is the control input. Since only distributed control is considered, agent \(i\) has access to solely relative information with respect to its neighbors \(N_i\): $\{p_j-p_i,j\in N_i\}$.

To drive the configuration \(p\) toward the desired manifold \(\varPi(\tilde{p},R)\), a commonly used distributed control law takes the form
\begin{equation}
    u_i = -\sum_{j \in N_i} L_{ij} (p_j - p_i),
\end{equation}
where the matrix weights \(L_{ij} \in \mathbb{R}^{d \times d}\) (to be designed) act on the relative positions. In stacked form, the closed-loop system becomes
\begin{equation}
    \dot{p} = -L p,
\end{equation}
where \(L\) is a matrix-valued Laplacian. If, in addition, $\ker(L) = \varPi(\tilde{p},R)$ and $L \succeq 0$, then using the Lyapunov function $V=\frac12\|p\|^2$, it can be proved that $p$ converges to the desired manifold $\varPi(\tilde{p},R)$ asymptotically.

\subsection{Formation Maneuvering and Leader Reconfiguration}
To enable the formation to actively execute translations and non-uniform scalings of the nominal shape, we adopt a leader-follower strategy. Specifically, a small set of agents, designated as leaders, are directly controlled to realize the desired maneuver, while the remaining followers, through the distributed control law, automatically adjust their positions to maintain the formation.

Unlike the shape stabilization problem discussed in the previous subsection, formation maneuvering imposes additional requirements on the matrix-valued Laplacian. Given a leader set $V_l \subset V$ with follower set $V_f = V \setminus V_l$, we have $p = [p_l^\top, p_f^\top]^\top$ and $
L = [\begin{smallmatrix} L_{ll} & L_{lf} \\ L_{fl} & L_{ff} \end{smallmatrix}]
$, accordingly. If $\ker(L) = \varPi(\tilde{p},R)$, then $L p = 0$ for any $p \in \varPi(\tilde{p},R)$. If the block $L_{ff}$ is nonsingular, followers are uniquely determined by leaders:
\begin{equation} \label{eq_flr}
p_f = - (L_{ff})^{-1} L_{fl} p_l.
\end{equation}

In contrast to \cite{he2025} where the leader set is fixed a priori, we further require that the leaders can be dynamically selected from the agents to accommodate leader failures or task-driven reconfigurations. Therefore, we introduce the following notion.

\begin{definition}[Formation Spectrum]
\label{def:formation_spectrum}
For a nominal formation $(G,\tilde{p},R)$ with $n$ agents in $\mathbb{R}^d$, the matrix-valued Laplacian $L \in \mathbb{R}^{dn \times dn}$ is said to have a formation spectrum if it satisfies the following three conditions:
\begin{enumerate}
    \item[(i)] $L \succeq 0$;
    \item[(ii)] $\ker(L) = \varPi(\tilde{p},R)$;
    \item[(iii)] $Q^\top L Q \succ 0,\ \ \forall Q \in \mathcal{Q}$
\end{enumerate}
where $\mathcal{Q} = \{ Q \otimes I_d \mid Q \in \{0,1\}^{n \times (n-2)},\ Q^\top Q = I_{n-2} \}$ selects any two agents as leaders, and $Q^\top L Q$ is the corresponding follower block.
\end{definition}

The constraint $Q^\top LQ\succ0$ for $\forall Q\in\mathcal{Q}$ ensures that $L_{ff}\succ0$ always holds no matter which two agents are selected as leaders. Such setting benefits for enhancing the adaptability of the formation to complex environments where leaders may need to be rechosen \cite{Li2024}.

By Lemma~\ref{lem:dimVARPi}, the following assumption ensures that $\dim(\varPi(\tilde{p}_l,R)) = 2d$ for any two leaders, thereby establishing a one-to-one correspondence between the leader state and the maneuver parameters $s$ and $\tau$.

\begin{assumption}\label{ass:graph}
For the nominal formation $(G,\tilde{p},R)$, the configuration $(\tilde{p}, R)$ satisfies $\prod_{l=1}^d \tilde{p}^l_{ji,R} \neq 0$ for every pair of distinct agents $j,i$, where $\tilde{p}^l_{ji,R}$ denotes the $l$-th coordinate of $R^{\top}(\tilde{p}_j - \tilde{p}_i)$.
\end{assumption}

Assumption~\ref{ass:graph} is readily satisfied. Given any set of distinct nominal positions $\{\tilde{p}_i\}$, there always exists a rotation matrix $R \in \SO(d)$ such that Assumption~\ref{ass:graph} holds. To see this, observe that for each pair of distinct indices $\{j,i\}$ and each coordinate $l \in \{1,\dots,d\}$, the equation $\tilde{p}^l_{ji,R} = 0$ defines a proper algebraic subset of $\SO(d)$ of codimension at least one. Taking the union over the finitely many such pairs and coordinates yields a closed set $\mathcal{F} \subset \SO(d)$ which is therefore a finite union of lower-dimensional algebraic sets. Hence its complement $\SO(d) \setminus \mathcal{F}$ is nonempty. Choosing any $R$ from this complement guarantees that Assumption~\ref{ass:graph} holds.

\begin{remark}\label{rem:design_feasibility1}
Compared to classical affine formation control~\cite{Zhao2018}, which requires at least $d+1$ leaders in $\mathbb{R}^d$, the proposed approach needs only two leaders regardless of the dimension $d$. Moreover, our approach requires only that the nominal positions $\{\tilde{p}_i\}$ be pairwise distinct, avoiding the generic position assumptions (e.g., no three collinear in 2-D, no four coplanar in 3-D) commonly assumed in  distance-ratio-based \cite{Cao2020}, angle-based \cite{Buckley2021}, affine \cite{Zhao2018}, clique-based \cite{HeGen2025}, and bearing-based \cite{Erskine2024} methods.
\end{remark}

The following lemma provides a necessary graph-theoretic condition for the formation spectrum.

\begin{lemma}\label{lem:2vc_sp}
If the nominal formation \((G,\tilde p,R)\) satisfies Assumption~\ref{ass:graph} and has a Laplacian \(L\) with a formation spectrum, then \(G\) is \(2\)-vertex-connected.
\end{lemma}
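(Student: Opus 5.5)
We argue by contradiction. Suppose $G$ is not $2$-vertex-connected. Then, by Definition~\ref{def:2vc}, either $G$ is itself disconnected, or there is a cut vertex $x\in V$ such that $G-x$ is disconnected. In either case we will choose the two leaders so that some connected component $C$ of the followers is attached to the rest of the graph through at most one leader. We will then show that the follower block $Q^\top L Q$ is singular for this choice of $Q\in\mathcal{Q}$, which contradicts condition (iii) of Definition~\ref{def:formation_spectrum}.

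\emph{Case 1: $G$ is disconnected.} Let $C_1$ be one component and $C_2$ the union of the others. Pick one leader $a\in C_1$ and one leader $b\in C_2$; the followers then split into $C_1\setminus\{a\}$ and $C_2\setminus\{b\}$. If one of these sets is empty, then $n=2$ or that component is a singleton. We will handle the small-$n$ degenerate situations at the end.

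\emph{Case 2: $G$ is connected with a cut vertex $x$.} Let $C$ be one component of $G-x$. Choose the leaders as $x$ and some vertex $y\notin C\cup\{x\}$. In both cases we obtain a nonempty follower set $F=C$ (or $C_1\setminus\{a\}$) whose only neighbours outside $F$ belong to a single leader $\ell$.

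\emph{Key step: singularity of the follower block.} Take the nominal vector $\tilde p$ and the translation by $-\tilde p_\ell$. This gives $q=\tilde p-1_n\otimes\tilde p_\ell\in\varPi(\tilde p,R)$, with $s=1_d$ and $\tau=-\tilde p_\ell$, so $Lq=0$ by condition (ii). Define $z$ equal to $q$ on $F$ and $0$ elsewhere. For each $i\in F$, the row $i$ of $Lz$ is $\sum_j L_{ij}q_j$ restricted to $j\in F$. Since the only neighbours of $i$ outside $F$ are $\ell$ (where $q_\ell=0$) and vertices with $L_{ij}=0$, we get $(Lz)_i=(Lq)_i=0$. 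Hence the follower block, restricted to the block $F$, annihilates $z_F$. Since $F$ is disconnected from the other followers, $Q^\top LQ$ is block diagonal up to permutation, and $z_F$ padded with zeros lies in its kernel.

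It remains to check that $z_F\neq0$. This holds because Assumption~\ref{ass:graph} gives $\tilde p_i\neq\tilde p_\ell$ for every $i\in F$. Therefore $Q^\top L Q$ is singular, which contradicts (iii).

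\emph{Main obstacle: the degenerate cases.} The main difficulty is the bookkeeping for these cases: we need $F$ nonempty, which requires a second leader $y$ outside $C\cup\{x\}$. In Case 2 such a $y$ exists because $G-x$ has at least two components. In Case 1 we need a component of size at least two, or at least three components. If every component is a singleton and there are exactly two, then $n=2$, and we must interpret the statement for $n\ge3$ (the $K_2$ convention), since $\mathcal{Q}$ is trivial there. If there are at least three singleton components, we pick both leaders elsewhere, and the isolated follower $i$ has $L_{ii}=0$ because its block row sums to zero. This again makes the block singular, so the argument closes.
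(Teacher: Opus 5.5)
Your proof is correct, but it reaches the contradiction with a different choice of leaders than the paper.

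\emph{The paper's route.} It keeps the cut vertex $x$ as a follower. If some component $V_i$ of $G-x$ has at least two vertices, it places both leaders inside $V_i$, so $x$ and the remaining components $U$ are followers. It then argues by dimension counting that the block rows of $L_{ff}$ indexed by $U$ are linearly dependent. These rows reduce to $[L_{Ux}\;L_{UU}]$, which annihilate the $2d$-dimensional space $\varPi(\tilde p_\beta,R)$ given by Lemma~\ref{lem:dimVARPi} and Assumption~\ref{ass:graph}. The star case is handled separately: follower leaves have zero block rows, because $S(R)(\tilde p_x-\tilde p_y)$ spans $\mathbb{R}^d$.

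\emph{Your route.} You make $x$ itself a leader, so that a whole component $C$ of $G-x$ hangs off a single leader. You then exhibit an explicit kernel vector of the follower block: $\tilde p_i-\tilde p_x$ on $C$ and $0$ elsewhere.

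\emph{What each buys.} Your argument is uniform and economical. The star case and the three-vertex path are covered by the same computation. In the path, the paper's ``any other leaf'' is vacuous, so its Case~2 as written needs an extra step. You also use only the single element of $\varPi$ with $s=1_d$, $\tau=-\tilde p_x$, plus pairwise distinctness of nominal positions, rather than the full dimension of $\varPi$. The paper's argument instead shows the complementary fact that leader pairs lying on one side of a cut vertex also fail. In other words, a cut vertex obstructs condition (iii) even when it is a follower.

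Two small simplifications are available:
\begin{enumerate}
\item Your Case~1 is unnecessary. Definition~\ref{def:2vc} directly provides a vertex $x$ with $G-x$ disconnected, and your Case~2 computation never uses connectivity of $G$.
\item Under that definition, every graph on two vertices is already $2$-vertex-connected, so your $n=2$ caveat needs no extra convention.
\end{enumerate}
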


\begin{pf}
Suppose \(G\) is not \(2\)-vertex-connected. Then there exists a cut vertex \(x\) such that \(G-x\) has at least two connected components. Denote the vertex sets of these components by \(V_1,\dots,V_k\) with \(k\ge2\). We consider two cases.

\textbf{Case 1:} There exists a component \(V_i\) with \(|V_i| \ge 2\). Choose two distinct agents \(r_1, r_2 \in V_i\) as leaders. Let \(U = \bigcup_{j \neq i} V_j\) and \(\bar{U} = V \setminus (U \cup \{x\})\). Reorder the vertices so that the Laplacian \(L\) has the block form
\begin{equation}
L = \begin{bmatrix}
L_{\bar{U}\bar{U}} & L_{\bar{U}x} & 0      \\
L_{x\bar{U}}          & L_{xx}          & L_{xU} \\
0                         & L_{Ux}          & L_{UU}
\end{bmatrix}.
\end{equation}
Set \(p_\beta = (p_x^\top, p_U^\top)^\top\). By Lemma \ref{lem:dimVARPi} and Assumption~\ref{ass:graph}, we have \(\dim(\Pi(\tilde p_\beta,R)) = 2d\). For any \(p_\beta \in \Pi(\tilde p_\beta,R)\), the equation \(Lp=0\) restricted to \(U\) gives
\begin{equation}
[L_{Ux} , L_{UU}]p_{\beta} = 0.
\end{equation}
Hence the rows of \([L_{Ux}\; L_{UU}]\) are linearly dependent. These rows are part of the follower block \(L_{ff}\), so \(L_{ff}\) is singular—a contradiction.

\textbf{Case 2}: All components are singletons. Choose leaders as two leaves. For any other leaf \(y\), \(L_{yx}p_x+L_{yy}p_y=0\) for all \(p\in\Pi\). With \(p=\mathbf{1}_n\otimes\tau\) we get \(L_{yx}+L_{yy}=0\), so \(L_{yx}(p_x-p_y)=0\). By Assumption~\ref{ass:graph}, \(p_x-p_y\) spans \(\mathbb{R}^d\). Hence \(L_{yx}=0\) and \(L_{yy}=0\). Thus rows of \(L_{ff}\) indexed by leaves are zero, making \(L_{ff}\) singular—a contradiction.

Therefore \(G\) is \(2\)-vertex-connected.
\end{pf}

\begin{remark}
Lemma~\ref{lem:2vc_sp} shows that the formation spectrum necessarily requires the underlying graph to be $2$-vertex-connected. This condition is stronger than the $2$-rooted graphs used in prior works \cite{Lin2014,he2025}, but is nonetheless practical: any $2$-rooted graph becomes $2$-vertex-connected by adding an edge between its two roots (Lemma~\ref{lem:2vc}); in practice, such an edge naturally exists because the two leaders typically need to communicate. Moreover, $2$-vertex-connected graphs allow any pair of agents to serve as leaders (Lemma~\ref{lem:2vc_2root}), enabling dynamic leader reconfiguration—capabilities essential for open multi-agent systems.
\end{remark}


\section{Problem Statement} \label{sec:problem}
We consider an open multi-agent system in which agents and interaction links may dynamically appear or disappear. The interaction topology evolves through one of four elementary operations at each step: agent joining, agent leaving, edge addition, or edge removal. We index the system after each such elementary operation by $t\in\mathbb{N}$. At stage $t$, the system is characterized by a nominal formation $F^t = (G^t, \tilde{p}^t, R)$, where $G^t= (V^t, E^t)$.

Our goal is to construct, for each $F^t$, a matrix-valued Laplacian $L^t\in\mathbb{R}^{dn^t\times dn^t}$ that has a formation spectrum. To avoid centralized recomputation after each topological change, we seek a recursive update mechanism: when an elementary operation transforms $F^t$ into $F^{t+1} = (G^{t+1}, \tilde{p}^{t+1}, R)$ with $G^{t+1} = (V^{t+1}, E^{t+1})$, the Laplacian should be updated to $L^{t+1}$ in a local manner with minimum structural perturbation. Any such update can be expressed in the general form
\begin{equation}\label{eq:L_update}
L^{t+1} = \mathcal{E}^t \bigl(L^t + \Delta^t\bigr) + \bar{\Delta}^t,
\end{equation}
where $\mathcal{E}^t: \mathbb{R}^{dn^t \times dn^t} \to \mathbb{R}^{dn^{t+1} \times dn^{t+1}}$ is the dimension-adjustment operator for agent joining/leaving, while $\Delta^t=[\Delta^t_{ij}]$ and $\bar{\Delta}^t = [\bar{\Delta}_{ij}^t]$ encode local edge weight modifications applied before and after $\mathcal{E}^t$, respectively, where $\Delta^t_{ij} \in \mathbb{R}^{d\times d}$ denotes the $(i,j)$-th block of $\Delta^t$ (and similarly for $\bar{\Delta}^t$). Associated with each update, we define
\begin{equation}
    E_{\text{add}} = \{ \{i,j\} \notin E^t \,|\, \Delta^t_{ij} \neq 0 \text{ or } \bar{\Delta}^t_{ij} \neq 0 \}
\end{equation}
as the set of edges added to the graph, and
\begin{equation}
    E_{\text{mod}} = \{ \{i,j\} \in E^t \,|\, \Delta^t_{ij} \neq 0 \text{ or } \bar{\Delta}^t_{ij} \neq 0 \}
\end{equation}
as the set of existing edges whose weights are modified. 

The challenge is to design these local updates so that $L^{t+1}$ inherits the desired spectral properties from $L^t$, yet any naive local modification to the graph may disrupt these global spectral properties of the Laplacian matrix, as illustrated in Fig.~\ref{fig:idea}. This leads to the following problem.

\begin{figure}[tbp]
	\centering
        \includegraphics[width=\linewidth]{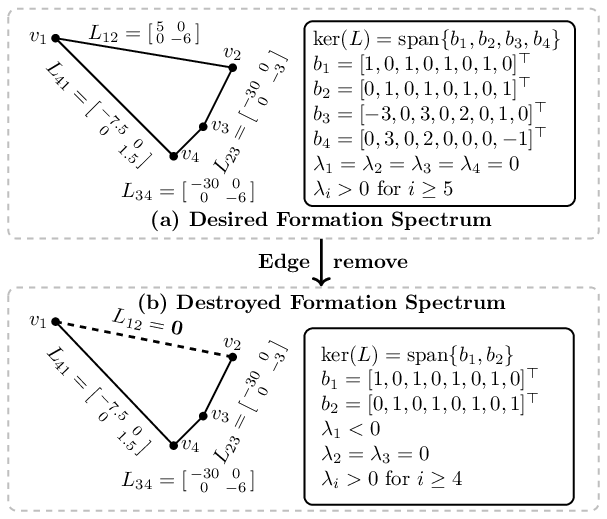}
        \caption{Effect of a simple edge removal on the formation spectrum of the matrix-valued Laplacian $L$. (a) $\ker(L)$ has four basis vectors $b_i$, all nonzero eigenvalues $\lambda_i$ are positive. (b) After deleting edge $\{v_1,v_2\}$ (its weight set to $0$), $\ker(L)$ reduces to two basis vectors only, a negative eigenvalue appears. This raises the question: how to preserve the formation spectrum under dynamic topology changes?}
	\label{fig:idea}
\end{figure}

\begin{problem}
\label{pbm:topology_update}
Given a nominal formation $F^t$ under Assumption~\ref{ass:graph} and its Laplacian $L^t$ possessing a formation spectrum. For each elementary operation (i)-(iv), design $\Delta^t$ and $\bar{\Delta}^t$ in \eqref{eq:L_update} such that the induced $F^{t+1}$ has a Laplacian $L^{t+1}$ that also possesses a formation spectrum, and among all such designs, $|E_{\text{mod}}|$ is minimized subject to the constraint that $|E_{\text{add}}|$ is minimized.
\begin{enumerate}
\item \textbf{Agent joining:} Add a new agent $v\notin V^t$ with prescribed nominal position $\tilde p_v\in\mathbb R^d$, yielding 
$\tilde{p}^{t+1}=[(\tilde{p}^t)^{\top},\tilde{p}_v^{\top}]^{\top}$, 
$V^{t+1}=V^t\cup \{v\}$, and $\mathcal{E}^t(\mathcal{M}) = [\begin{smallmatrix} \mathcal{M} & 0 \\ 0 & 0 \end{smallmatrix}]$.

\item \textbf{Edge addition:} Add a new edge $\{j,k\} \notin E^t$, yielding 
$\tilde{p}^{t+1}=\tilde{p}^t$, 
$V^{t+1}=V^t$,
 $\mathcal{E}^t(\mathcal{M}) = \mathcal{M}$, and $\{j,k\} \in E^{t+1}$.

\item \textbf{Agent leaving:} Remove an agent $u\in V^t$ and all its incident edges, yielding 
$\tilde{p}^{t+1}=\tilde{p}^t|_{V^{t+1}}$ (subvector indexed by $V^{t+1}=V^t\setminus\{u\}$), 
$\mathcal{E}^t(\mathcal{M}) = \mathcal{M}_{V^{t+1},V^{t+1}}$ (the principal submatrix indexed by $V^{t+1}$), and $\forall i\in V^t,\; \{u,i\}\notin E^{t+1}$.

\item \textbf{Edge removal:} Remove an existing edge $\{j,k\} \in E^t$, yielding 
$\tilde{p}^{t+1}=\tilde{p}^t$,
$V^{t+1}=V^t$,
$\mathcal{E}^t(\mathcal{M}) = \mathcal{M}$, and $\{j,k\} \notin E^{t+1}$.
\end{enumerate}
\end{problem}
We prioritize minimizing $|E_{\mathrm{add}}|$ since edge additions correspond to structural expansions of the interaction network, potentially requiring new sensing or communication links and thus incurring higher implementation cost. In contrast, edge modifications operate on existing interactions by adjusting edge weights without changing the underlying topology.

\section{Spectral Property Maintenance under Topology Changes}\label{sec:topology}
To maintain the spectral properties under topology changes in a distributed manner, we first introduce a matrix-weighted linear constraint from \cite{he2025}. For any agent $k$ and any two of its neighbors $i,j\in N_k$, the constraint is given by
\begin{equation}
W_{jk} p_{ik}+W_{ki} p_{jk}=0,
\label{equ_vvc}
\end{equation}
where $p_{ik}=p_i-p_k$, $p_{jk}=p_j-p_k$, $W_{jk} = \diag\left(\tilde{p}_{jk,R}\right) R^{\top}$, $W_{ki} = \diag\left(\tilde{p}_{ki,R}\right) R^{\top}$, $\tilde{p}_{ki,R}=R^{\top} \tilde{p}_{ki}$, $\tilde{p}_{ki} = \tilde{p}_k - \tilde{p}_i$, $\tilde{p}_{jk,R}=R^{\top} \tilde{p}_{jk}$, $\tilde{p}_{jk} = \tilde{p}_j - \tilde{p}_k$. 

\begin{lemma}[\cite{he2025}] \label{lem:2}
The constraint \eqref{equ_vvc} is invariant under the non-uniform scaling and translation transformations defined in \eqref{eq_dss}. That is, if \((p_i, p_j, p_k)\) satisfies \eqref{equ_vvc}, then for any \(s, \tau \in \mathbb{R}^d\), the transformed triplet $(S(R)p_i+\tau,\; S(R)p_j+\tau,\; S(R)p_k+\tau)$ also satisfies \eqref{equ_vvc}, where \(S(R)=R\operatorname{diag}(s)R^\top\) as in \eqref{eq_dss}.
\end{lemma}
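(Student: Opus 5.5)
The plan is to substitute the transformed triplet directly into \eqref{equ_vvc} and reduce the whole computation to commuting diagonal matrices in the rotated frame $R^\top$.

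\textbf{Step 1 (translation).} The constraint involves positions only through the differences $p_{ik}=p_i-p_k$ and $p_{jk}=p_j-p_k$. Adding a common $\tau$ to all three positions leaves these differences unchanged, so translation drops out. It therefore suffices to show that $S(R)p_{ik}$ and $S(R)p_{jk}$ again satisfy \eqref{equ_vvc}.

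\textbf{Step 2 (scaling in the rotated frame).} Write $D_{jk}=\diag(\tilde p_{jk,R})$ and $D_{ki}=\diag(\tilde p_{ki,R})$, so that $W_{jk}=D_{jk}R^\top$ and $W_{ki}=D_{ki}R^\top$. Since $R^\top R=I_d$,
\[
W_{jk}S(R)p_{ik}=D_{jk}R^\top R\diag(s)R^\top p_{ik}=D_{jk}\diag(s)R^\top p_{ik}.
\]
Diagonal matrices commute, so this equals $\diag(s)D_{jk}R^\top p_{ik}=\diag(s)W_{jk}p_{ik}$. The same computation gives $W_{ki}S(R)p_{jk}=\diag(s)W_{ki}p_{jk}$. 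Summing the two terms yields
\[
W_{jk}S(R)p_{ik}+W_{ki}S(R)p_{jk}=\diag(s)\bigl(W_{jk}p_{ik}+W_{ki}p_{jk}\bigr)=0
\]
by the hypothesis that $(p_i,p_j,p_k)$ satisfies \eqref{equ_vvc}. This gives the invariance for all $s,\tau\in\mathbb{R}^d$.

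No step is a real obstacle. The only point needing care is Step 2: one must see that $W_{jk}$ and $W_{ki}$ factor as a diagonal matrix times the same $R^\top$ that also diagonalizes $S(R)$. This factorization is what lets $\diag(s)$ be pulled out of both terms as a common left factor. The weights depend only on the nominal configuration $\tilde p$ and not on $p$, so they stay fixed under the transformation.
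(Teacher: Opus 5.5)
Your proof is correct. Translation cancels in the differences $p_{ik},p_{jk}$, and because $W_{jk}=\diag(\tilde p_{jk,R})R^\top$ and $W_{ki}=\diag(\tilde p_{ki,R})R^\top$ share the same $R^\top$ that diagonalizes $S(R)$, the factor $\diag(s)$ can be pulled out on the left of both terms. The paper gives no proof of this lemma (it imports the result from \cite{he2025}); your direct computation is the natural verification and is all that the statement requires.
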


Lemma \ref{lem:2} implies that if we construct a Laplacian matrix $L$ based on \eqref{equ_vvc}, it always holds that $\Pi(\tilde{p},R)\subseteq\ker(L)$, which lays the foundation for establishing a formation spectrum. We now solve the four primitive topology operations in Problem~\ref{pbm:topology_update} under this matrix-weighted linear constraint. For notational simplicity, we use $F$, $L$, $F^+$, $L^+$, $\mathcal{E}$, $\Delta$, and $\bar{\Delta}$ to denote $F^t$, $L^t$, $F^{t+1}$, $L^{t+1}$, $\mathcal{E}^t$, $\Delta^t$, and $\bar{\Delta}^t$, respectively.

\subsection{Agent Joining} \label{subsec:agent_joining}
We now address agent joining in Problem~\ref{pbm:topology_update}. Let $v\notin V$ be the new agent with a pre-assigned nominal position $\tilde{p}_v$.  
To integrate $v$ into the formation while preserving the spectral properties, we connect it to two existing neighboring agents $i,j\in V$ such that $\{i,j\}\in E$, and set $E_{\text{add}} = \{\{i,v\},\{j,v\}\}$.

We next detail the construction of the Laplacian matrix for the updated formation. For the nominal formation $(G, \tilde{p}, R)$, we first reorder the agents in $V$ to place $i$ and $j$ at the end. The corresponding Laplacian matrix $L$ can be partitioned as
\begin{equation}
L = \begin{bmatrix}
L_{rr} & L_{rs} \\
L_{sr} & L_{ss}
\end{bmatrix},
\label{eq:L_block}
\end{equation}
where subscript $r$ corresponds to agents in $V \setminus \{i,j\}$ and $s$ corresponds to agents in $\{i,j\}$. For the triplet $(i,j,v)$, the constraint derived from \eqref{equ_vvc} takes the matrix form
\begin{equation}
M_{ijv} p_{ijv} = 0,
\label{eq:iju_matrix}
\end{equation}
where $M_{ijv} = \begin{bmatrix} W_{jv}, & W_{vi}, & W_{ij} \end{bmatrix}$, $W_{ij}=\diag\left(\tilde{p}_{ij,R}\right) R^{\top}$, $p_{ijv} = [p_i^\top, p_j^\top, p_v^\top]^\top$.

To embed this constraint distributively into the Laplacian framework, we define the positive semidefinite block
\begin{equation}\label{eq:ijv_block}
L^{ijv} = M_{ijv}^\top D M_{ijv} \in \mathbb{R}^{3d \times 3d},
\end{equation}
where $D\succ 0$ is a diagonal design matrix that provides a degree of freedom for weighting the constraint.

Following the update formula \eqref{eq:L_update}, we set $\Delta = 0$ and use $\mathcal{E}$ as defined in Problem~\ref{pbm:topology_update} for agent joining.  
The matrix $\bar{\Delta}$ is obtained by embedding $L^{ijv}$ into a $d|V^+| \times d|V^+|$ zero matrix according to the ordering $(r,i,j,v)$:
\begin{equation}\label{eq:Liju_pad}
\bar{\Delta} = L^{ijv}_{\text{pad}}(D) 
= \begin{bmatrix}
0 & 0 & 0 & 0 \\
0 & W_{jv}^\top D W_{jv} & W_{jv}^\top D W_{vi} & W_{jv}^\top D W_{ij} \\
0 & W_{vi}^\top D W_{jv} & W_{vi}^\top D W_{vi} & W_{vi}^\top D W_{ij} \\
0 & W_{ij}^\top D W_{jv} & W_{ij}^\top D W_{vi} & W_{ij}^\top D W_{ij}
\end{bmatrix}. 
\end{equation}

The Laplacian for the extended formation $(G^+,\tilde{p}^+,R)$ is
\begin{equation}
L^+ = \mathcal{E}(L) + L^{ijv}_{\text{pad}}(D).
\label{eq:LVA}
\end{equation}

\begin{theorem} \label{thm:agent_joining}
Let $F=(G,\tilde{p},R)$ satisfy Assumption~\ref{ass:graph} and $L$ be its Laplacian with a formation spectrum. For a new agent $v$ with nominal position $\tilde{p}_v$ such that the extended nominal configuration $(\tilde{p}^+ = [\tilde{p}^\top, \tilde{p}_v^\top]^\top,R)$ satisfies Assumption~\ref{ass:graph}, choose two distinct agents $i,j\in V$ with $\{i,j\}\in E$. Then there exists a diagonal matrix $D\succ0$ such that setting $\Delta=0$, $\bar{\Delta}$ given by \eqref{eq:Liju_pad} yields a formation $F^+ = (G^+,\tilde{p}^+,R)$ whose Laplacian $L^+$ given by \eqref{eq:LVA} has a formation spectrum. Moreover, $|E_{\text{add}}|=2$ and $|E_{\text{mod}}|=1$ are minimal.
\end{theorem}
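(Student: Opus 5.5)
The plan is to exploit the fact that $L^+$ is a sum of two positive semidefinite matrices, and to reduce each of the three conditions in Definition~\ref{def:formation_spectrum} to a statement about the common kernel of the two summands. I expect that in fact every diagonal $D\succ0$ works, so the existence claim will follow immediately. Write $\mathcal{E}(L)=\operatorname{diag}(L,0)$ and $L^{ijv}_{\text{pad}}(D)=P^\top M_{ijv}^\top D M_{ijv}P$, where $P$ selects the blocks $(i,j,v)$.

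\emph{(i) Positive semidefiniteness.} Since $L\succeq0$ and $M_{ijv}^\top D M_{ijv}\succeq0$, we get $L^+\succeq0$.

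\emph{(ii) Kernel.} Because $L^+$ is a sum of positive semidefinite matrices, $L^+p^+=0$ holds if and only if two conditions hold: $L p=0$ on the old coordinates, and $M_{ijv}p_{ijv}=0$.
\begin{itemize}
\item The first condition gives $p\in\varPi(\tilde p,R)$, i.e.\ $p=(I_n\otimes S(R))\tilde p+1_n\otimes\tau$.
\item By Assumption~\ref{ass:graph} the diagonal of $\diag(\tilde p_{ij,R})$ has no zero entries, so $W_{ij}=\diag(\tilde p_{ij,R})R^\top$ is invertible. Hence the second condition determines $p_v$ uniquely as $p_v=-W_{ij}^{-1}(W_{jv}p_i+W_{vi}p_j)$.
\item The nominal triple satisfies \eqref{equ_vvc}; this is a coordinatewise identity in the $R$-frame. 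Lemma~\ref{lem:2} then shows that $p_v=S(R)\tilde p_v+\tau$ satisfies the constraint, so by uniqueness it is the only solution.
\end{itemize}
Therefore $\ker L^+=\varPi(\tilde p^+,R)$.

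\emph{(iii) Follower blocks.} Fix any two leaders $a\neq b$ in $V^+$ and a $Q\in\mathcal Q$ selecting the followers. Then $Q^\top L^+Q\succeq0$, so it suffices to show its kernel is trivial. If $x^\top Q^\top L^+Qx=0$, then $L^+Qx=0$ because $L^+\succeq0$, so $Qx\in\varPi(\tilde p^+,R)$ and $Qx$ vanishes at $a$ and $b$.
\begin{itemize}
\item Vanishing at both leaders gives $S(R)\tilde p_a+\tau=S(R)\tilde p_b+\tau=0$.
\item Subtracting, $R\diag(s)R^\top(\tilde p_a-\tilde p_b)=0$. Every coordinate of $R^\top(\tilde p_a-\tilde p_b)$ is nonzero by Assumption~\ref{ass:graph} (applied to $\tilde p^+$), so $s=0$.
\item Then $\tau=0$, hence $x=0$.
\end{itemize}
This argument is uniform in the choice of leaders, including $v$ as a leader. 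Notably, it does not use the old follower-block property.

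\emph{Minimality of $|E_{\text{add}}|$.} By Lemma~\ref{lem:2vc_sp}, $G^+$ must be 2-vertex-connected, so $v$ has degree at least $2$; thus $|E_{\text{add}}|\ge2$, and the construction attains $2$.

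\emph{$|E_{\text{mod}}|=1$ for this construction.} The $(i,j)$ block is changed by $W_{jv}^\top D W_{vi}$. This block is nonzero, since it equals $R\,\diag(\tilde p_{jv,R})D\diag(\tilde p_{vi,R})R^\top$, whose diagonal factors have no zero entries. No other existing edge is touched, so $|E_{\text{mod}}|=1$.

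\emph{Lower bound $|E_{\text{mod}}|\ge1$.} This is the step I expect to need the most care. Suppose a valid design adds only $\{i,v\}$ and $\{j,v\}$ and modifies no existing edge, and set $C=L^+-\mathcal{E}(L)$.
\begin{itemize}
\item $C$ is symmetric and supported on the diagonal blocks together with the off-diagonal pairs $\{i,v\}$ and $\{j,v\}$; in particular $C_{ij}=0$.
\item Both $L^+$ and $\mathcal{E}(L)$ annihilate $\varPi(\tilde p^+,R)$, so $C$ does too, and $C$ has zero block-row sums (translations lie in the kernel).
\item Row $i$ of $C$ then reads $C_{iv}(p_v-p_i)=0$ for all $p\in\varPi(\tilde p^+,R)$.
\item As $s$ varies, $S(R)(\tilde p_v-\tilde p_i)=R\,\diag(R^\top(\tilde p_v-\tilde p_i))\,s$. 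Since all coordinates of $R^\top(\tilde p_v-\tilde p_i)$ are nonzero (Assumption~\ref{ass:graph}), these vectors span $\mathbb R^d$, which forces $C_{iv}=0$.
\end{itemize}
This contradicts $\{i,v\}\in E^+$, so at least one existing edge must be modified, and the value $1$ is minimal.
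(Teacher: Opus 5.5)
Your proof is correct, and it takes a genuinely different and shorter route than the paper's.

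\textbf{The paper's order.} It proves the follower-block condition first, using a Schur-complement argument split into cases: $v$ as a follower (deferred to the preliminary version), and $v$ as a leader replacing $r_2$, with subcases B1--B4 according to the roles of $i,j$. Other leader sets are dismissed as ``analogous''. Only afterwards does it obtain $\ker(L^+)=\varPi(\tilde p^+,R)$, via a rank count that relies on $L^+_{ff}\succ0$.

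\textbf{Your order.} You reverse this. You get the kernel directly, because the kernel of a sum of positive semidefinite matrices is the intersection of their kernels, and $W_{ij}$ is invertible. You then note that for $L^+\succeq0$, condition (iii) just says no nonzero element of $\ker(L^+)$ vanishes at two agents. Assumption~\ref{ass:graph} rules this out for $\varPi(\tilde p^+,R)$.

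\textbf{What your route buys.} It shows that (iii) is actually implied by (i), (ii) and Assumption~\ref{ass:graph}. It treats every leader pair uniformly and never uses the old follower blocks. It also shows that every diagonal $D\succ0$ gives a formation spectrum.

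\textbf{The bound $|E_{\text{mod}}|\ge1$.} Your argument here is more careful than the paper's. The paper uses only translations and tacitly assumes the perturbation has zero diagonal block at $i$. You allow $C_{ii}=-C_{iv}$ and eliminate $C_{iv}$ using the scaling directions, exactly as in Lemma~\ref{lem:minimal_perturbation}.

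\textbf{One caveat on ``every $D$ works''.} This holds for the spectrum and for $|E_{\text{add}}|$ and $|E_{\text{mod}}|$ as defined through $\bar\Delta$. However, if $R^\top L_{ij}R$ is diagonal, there is at most one $D\succ0$ with $L_{ij}+W_{jv}^\top D W_{vi}=0$, and that choice silently deletes the edge $\{i,j\}$. The paper's simulation uses exactly this effect to combine joining with edge removal. The paper's proof excludes it by choosing $D$ with $L^+_{ij}\neq0$. You should do the same if the joining operation is meant to keep $E\subseteq E^+$; this costs nothing, since at most one $D$ is excluded.
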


\begin{pf}
    See Appendix \ref{sec:proofTheagent_joining}.
\end{pf}

Based on this construction, a fully distributed implementation is given in Algorithm~\ref{alg:agent_joining}. 


\begin{algorithm}[htbp]
\caption{Distributed Laplacian Update for Agent Joining}
\label{alg:agent_joining}
\begin{algorithmic}[1]
\REQUIRE Nominal formation $(G, \tilde{p}, R)$ with Laplacian $L$; New agent $v$ with $\tilde{p}_v$;  each agent $i$ knows $\tilde{p}_i$, $N_i$, $\{L_{ij}\,|\,j\in N_i\}$ and can communicate with its neighbors

\STATE Agent $v$ discovers candidates in sensing range and broadcasts a join request
\STATE Candidates respond with their nominal configurations and neighbor information; $v$ selects two distinct agents $i,j$ from the candidates such that $\{i,j\}\in E$
\STATE Agent $v$ connects to $i$ and $j$, sets $N_i^+\leftarrow N_i\cup\{v\}$, $N_j^+\leftarrow N_j\cup\{v\}$, $N_v^+\leftarrow\{i,j\}$, computes $\Delta_{ij},\Delta_{iv},\Delta_{jv}$ per \eqref{eq:Liju_pad}, and sends $\Delta_{ij}$ to $i,j$, $\Delta_{iv}$ to $i$, $\Delta_{jv}$ to $j$
\STATE Set $L^+_{kl}\leftarrow (\mathcal{E}(L))_{kl}+\Delta_{kl}$ for  $l\in N_k^+$, $k\in\{i,j,v\}$.
\end{algorithmic}
\end{algorithm}

\begin{remark}
Theorem \ref{thm:agent_joining} enables iterative network expansion while preserving Laplacian spectral properties. Unlike centralized optimization approaches \cite{Mukherjee2020, Xiao2022}, Algorithm~\ref{alg:agent_joining} is fully distributed: it requires no global knowledge of the graph topology, agent indexing, or leader/follower roles; each new agent connects to only two existing agents in $\mathbb{R}^d$ via the local matrix-weighted constraint \eqref{eq:iju_matrix}, with the corresponding Laplacian update \eqref{eq:Liju_pad} computed using only relative positions. In contrast, the methods in \cite{Yang2019,Li2025} require each new agent to connect to $d+1$ existing agents with scalar weights and are limited to $\mathbb{R}^2$ and $\mathbb{R}^3$. Moreover, unlike \cite{Li2025} where newly added agents are restricted to be followers, our construction allows the new agent $v$ to be designated either as a follower or as a leader, enabling dynamic leader reallocation—a crucial feature for open multi-agent systems.
\end{remark}

\subsection{Edge Addition} \label{subsec:edge_addition}
We now address Problem~\ref{pbm:topology_update} for edge addition. At first glance, the task appears trivial: simply add a new edge $e =\{j,k \} \notin E$ with an arbitrary positive definite matrix weight. However, such naive addition would destroy the delicate kernel structure $\ker(L)=\varPi$. To characterize admissible perturbations $\Delta = L^+-L$ (with $\bar{\Delta}=0$), define the graph $G_\Delta = (V, E_\Delta)$ where $E_\Delta = \big\{ \{p,q\} \mid \Delta_{pq} \neq 0 \big\}$. Then $G_\Delta$ must satisfy the following necessary condition.

\begin{lemma}\label{lem:minimal_perturbation}
Let $L\in \mathbb{R}^{dn \times dn}$ be a matrix-valued Laplacian with $\ker(L)=\varPi$ and let $\Delta$ be a symmetric matrix such that $L^+=L+\Delta$ also satisfies $\ker(L^+)=\varPi$. 
Then in $G_\Delta$, the degree of every non‑isolated vertex is at least $2$.
\end{lemma}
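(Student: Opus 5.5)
Since both $L$ and $L^+$ have kernel exactly $\varPi$, the perturbation must annihilate it: $\Delta p = L^+p - Lp = 0$ for every $p\in\varPi$. The plan is to read this condition block-row by block-row and show that a non-isolated vertex of degree one in $G_\Delta$ forces a contradiction. A preliminary point is whether $\Delta$ is itself Laplacian-like, i.e.\ has zero block row sums. This follows without extra assumptions: taking $p = 1_n\otimes\tau\in\varPi$ (pure translation, $s=0$) gives $\sum_q \Delta_{pq}\tau = 0$ for all $\tau$, so $\sum_q\Delta_{pq}=0$ for every $p$.

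Suppose, for contradiction, that some vertex $a$ has exactly one neighbor $b$ in $G_\Delta$, so $\Delta_{ab}\neq 0$ and $\Delta_{aq}=0$ for all $q\notin\{a,b\}$. The translation argument gives $\Delta_{aa}=-\Delta_{ab}$. The $a$-th block row of $\Delta p=0$ then reads
\[
\Delta_{ab}(p_b-p_a)=0 \quad \text{for all } p\in\varPi.
\]
Now I would substitute $p=(I_n\otimes S(R))\tilde p + 1_n\otimes\tau$. This yields $\Delta_{ab}R\,\diag(s)R^\top\tilde p_{ba}=\Delta_{ab}R\,\diag(\tilde p_{ba,R})\,s = 0$ for all $s\in\mathbb{R}^d$, where $\tilde p_{ba,R}=R^\top(\tilde p_b-\tilde p_a)$. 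By Assumption~\ref{ass:graph}, all coordinates of $\tilde p_{ba,R}$ are nonzero, so $R\,\diag(\tilde p_{ba,R})$ is invertible. Hence the vectors $R\,\diag(\tilde p_{ba,R})\,s$ span $\mathbb{R}^d$, which forces $\Delta_{ab}=0$. This contradicts $\{a,b\}\in E_\Delta$, so every non-isolated vertex of $G_\Delta$ has degree at least $2$. This mirrors Case 2 in the proof of Lemma~\ref{lem:2vc_sp}.

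The main obstacle is that the lemma as stated does not list Assumption~\ref{ass:graph}. Without it, $\Delta_{ab}$ need only vanish on the span of $\{S\tilde p_{ba}\}$, which could be a proper subspace, so the contradiction would fail. I would invoke the standing Assumption~\ref{ass:graph} from Problem~\ref{pbm:topology_update} explicitly at this step. The other point to note is that the argument uses only the inclusion $\varPi\subseteq\ker(\Delta)$, not the full equality of kernels; that inclusion is all we extract from $\ker L=\ker L^+=\varPi$.
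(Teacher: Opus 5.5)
Your proposal is correct and follows the paper's proof essentially step for step. Both extract $\varPi\subseteq\ker(\Delta)$, use translations to get zero block row sums, reduce a degree-one vertex to $\Delta_{ab}(p_b-p_a)=0$ on $\varPi$, and conclude $\Delta_{ab}=0$ via Assumption~\ref{ass:graph}. Your identity $S(R)\tilde p_{ba}=R\,\diag(\tilde p_{ba,R})\,s$ just spells out the spanning step the paper leaves implicit, and you are right that the paper also relies on that assumption even though the lemma statement omits it.
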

\begin{pf}
Since $\ker(L)=\ker(L^+)=\varPi$, for any $p\in\varPi$ we have $Lp=0$ and $L^+p=0$, hence $\Delta p=0$. Thus $\varPi\subseteq\ker(\Delta)$. Choosing $p = \mathbf{1}_n\otimes\tau$ gives $\left( \sum_{u \in V} \Delta_{ju} \right) \tau = 0$ for all $\tau$, so $\sum_{u \in V} \Delta_{ju} = 0$ for each vertex $j$.

Suppose there exists a vertex $j$ that has exactly one neighbor $k$ in the graph induced by the nonzero blocks of $\Delta$, i.e., $\Delta_{jk}\neq0$ and $\Delta_{jw}=0$ for all $w\neq j,k$. From the row‑sum condition, $\Delta_{jj}+\Delta_{jk}=0$. For any $p\in\varPi$, the $j$-th block of $\Delta p=0$ reads
\begin{equation}
\Delta_{jj}p_j + \Delta_{jk}p_k = \Delta_{jj}(p_j-p_k)=0.
\end{equation}
Since this must hold for all $p\in\varPi$ and under Assumption~\ref{ass:graph}, we conclude $\Delta_{jj}=\Delta_{jk}=0$, a contradiction. Hence no vertex can have degree exactly $1$. If a vertex has no neighbors, its entire row of $\Delta$ is zero, which is allowed. Therefore, every non‑isolated vertex has degree at least $2$.
\end{pf}

By Lemma~\ref{lem:minimal_perturbation}, any nontrivial perturbation must involve at least three vertices; the simplest such structure is a triangle, whose associated constraint is exactly \eqref{equ_vvc}. To add only the single edge $e$, we therefore construct a perturbation $\Delta$ that corresponds to the Laplacian of a cycle containing $e$, which can be built from triangle primitives.

\subsubsection{Construct the perturbation as the Laplacian of a cycle via triangulation}
By $2$-vertex-connectivity, there exists a path from $j$ to $k$. Without loss of generality, let its vertex sequence be $1,2,\dots,q$ with $1=j$ and $q=k$. Adding $e=\{j,k \}$ creates the cycle $C = (1,2,\dots,q,1)$. To construct a $\Delta$ that adds exactly $e$, we adopt a star-shaped triangulation of $C$ with $1$ (i.e., $j$) as the common vertex. Specifically, we add edges $\{ \{1,i\} \,|\, i=3,\dots,q-1 \}$, yielding triangles $\triangle_{1,2,3},\triangle_{1,3,4},\dots,\triangle_{1,q-1,q}$. For each triangle $\triangle_{1,i,i+1}$ ($i=2,\dots,q-1$), construct the $3d\times 3d$ local matrix $L^{1i(i+1)}(D_i)$ as in \eqref{eq:ijv_block} with the vertex order $(1,i,i+1)$. Then embed it into the $dn\times dn$ matrix $L^{1i(i+1)}_{\text{pad}}(D_i)=0$. Specifically, let $(a_1,a_2,a_3) = (1,i,i+1)$. For $b,d \in\{1,2,3\}$, 
\begin{equation}\label{eq_pad}
    \bigl(L^{1i(i+1)}_{\text{pad}}(D_i)\bigr)_{a_ba_d} = \bigl(L^{1i(i+1)}(D_i)\bigr)_{bd}.
\end{equation}
The total perturbation is
\begin{equation}\label{eq:cycleL}
\Delta = \sum_{i=2}^{q-1} L^{1i(i+1)}_{\text{pad}}(D_i).
\end{equation}
We now cancel the contributions of the temporary diagonals. For each interior diagonal $\{1,i\}$ with $i=3,\dots,q-1$, which is shared by triangles $\triangle_{1,i-1,i}$ and $\triangle_{1,i,i+1}$, cancellation of its contributions requires
\begin{equation}
\label{eq:cancel}
W_{i,i-1}^\top D_{i-1}W_{i-1,1} + W_{i,i+1}^\top D_i W_{i+1,1}=0.
\end{equation}

Substituting $W_{ab} = \diag(\tilde{p}_{ab,R}) R^{\top}$ into \eqref{eq:cancel} and simplifying with $R^\top R = I$ gives a diagonal condition. For each coordinate $l=1,\dots,d$, let $d_i^l$ denote the $l$-th diagonal entry of $D_i$, we obtain the scalar recurrence
\begin{equation}
\label{eq:recurrence}
d_i^l = \gamma_{i-1}^l \, d_{i-1}^l, \, \gamma_{i-1}^l = \frac{\tilde p_{i-1,i,R}^l  \tilde p_{i-1,1,R}^l}{\tilde p_{i,i+1,R}^l  \tilde p_{i+1,1,R}^l}, \, i=3,\dots,q-1.
\end{equation}

Under Assumption~\ref{ass:graph}, $\tilde p_{ab,R}^l \neq 0$ for all distinct $a,b\in V$ and all $l=1,\dots,d$, so the geometric factors $\gamma_i^l$ are well-defined and non-zero. These factors completely determine the feasibility of the construction. Once $D_2\succ0$ is chosen, all subsequent $D_i$ are uniquely determined by the recurrence \eqref{eq:recurrence}, and the necessary and sufficient condition for $D_i\succ0$ is $\gamma_i^l>0$ for all $i,l$. This leads to the following characterization.

\begin{theorem}\label{thm:edge_addition}
Let $F=(G,\tilde{p},R)$ satisfy Assumption~\ref{ass:graph} and $L$ be its Laplacian with a formation spectrum. For a new edge $e=\{j,k\}\notin E$, suppose that there exists a path with vertex sequence $j=1,2,\dots,q=k$ such that all geometric factors $\gamma_i^l$ defined in \eqref{eq:recurrence} are positive. Then, by constructing $\Delta$ via \eqref{eq:cycleL} with $D_i$ specified in \eqref{eq:recurrence} where $D_2$ is designed to be positive definite, there exists $\varepsilon>0$ such that the perturbed Laplacian $L^+=L+\varepsilon\Delta$ has a formation spectrum and $E_{\text{add}}=\{e\}$ is minimal.
\end{theorem}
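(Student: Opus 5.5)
The plan is to treat $L^+=L+\varepsilon\Delta$ as a small perturbation of $L$ that is a sum of positive semidefinite triangle blocks vanishing on $\varPi$, and then to verify the three conditions of Definition~\ref{def:formation_spectrum} in turn.

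\textbf{Structure of $\Delta$.} Each $D_i$ produced by \eqref{eq:recurrence} is diagonal and positive definite, because $D_2\succ0$ and every $\gamma_i^l>0$. Hence each $L^{1i(i+1)}_{\text{pad}}(D_i)=M^\top D_iM$ (padded) is positive semidefinite, and so is $\Delta\succeq0$. By Lemma~\ref{lem:2} each triangle constraint is invariant under the maneuvers in \eqref{eq_dss}, so $M_{1,i,i+1}$ annihilates the restriction of every $p\in\varPi$. This gives $\varPi\subseteq\ker(\Delta)$.

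The cancellation condition \eqref{eq:cancel} is guaranteed by the recurrence. It makes the off-diagonal blocks on the temporary diagonals $\{1,i\}$, $i=3,\dots,q-1$, vanish. The nonzero off-diagonal blocks of $\Delta$ then lie only on path edges $\{i,i+1\}$, which already belong to $E$, and on $\{1,q\}=e$. The block $\Delta_{1q}=W_{q-1,q}^\top D_{q-1}W_{1,q-1}$ is nonzero under Assumption~\ref{ass:graph}. Therefore $E_{\text{add}}=\{e\}$, which is trivially minimal since at least the new edge must be added.

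\textbf{Conditions (i) and (ii).} Since $L\succeq0$ and $\Delta\succeq0$, we have $L^+\succeq0$ for every $\varepsilon>0$. For the kernel, $x\in\ker(L^+)$ gives $x^\top Lx+\varepsilon x^\top\Delta x=0$, so $Lx=0$. Hence $\ker(L^+)=\ker(L)\cap\ker(\Delta)=\varPi\cap\ker(\Delta)=\varPi$. Both (i) and (ii) therefore hold for every $\varepsilon>0$, with no perturbation argument needed.

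\textbf{Condition (iii).} For any $Q\in\mathcal Q$ we have $Q^\top L^+Q=Q^\top LQ+\varepsilon Q^\top\Delta Q\succeq Q^\top LQ\succ0$, since $Q^\top\Delta Q\succeq0$. So (iii) also holds for all $\varepsilon>0$, and in particular some $\varepsilon>0$ exists.

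\textbf{Main obstacle and fallback.} The only delicate point is the claim that every padded triangle block is positive semidefinite, which is what lets the cancellation leave a positive semidefinite sum. This needs the $D_i$ positive, which is exactly the hypothesis $\gamma_i^l>0$. I will also check that cancelling the diagonal-related off-diagonal blocks does not break the Laplacian row-sum property. It does not: each triangle block already has zero block row sums, because $M(1_3\otimes\tau)=0$ by translation invariance, and a sum of such blocks keeps zero row sums. If any sign issue appeared, I would instead argue with $\varepsilon$ small, using continuity of eigenvalues on $\varPi^\perp$ and of the finitely many follower blocks. That is presumably why $\varepsilon$ appears in the statement.
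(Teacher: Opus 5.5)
Your argument is correct and has the same skeleton as the paper's. Positivity of all $\gamma_i^l$ makes every $D_i\succ0$, so $\Delta\succeq0$. The recurrence cancels the temporary diagonals, leaving $\Delta$ supported on the cycle. Condition (iii) then follows from $Q^\top L^+Q\succeq Q^\top LQ\succ0$.

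For the kernel you use $\ker(L+\varepsilon\Delta)=\ker L\cap\ker\Delta$, valid for positive semidefinite summands. This is cleaner than the paper's route. The paper defers to the agent-joining proof, which takes a rank lower bound from $L^+_{ff}\succ0$ and combines it with $\dim\varPi=2d$. Your version needs neither condition (iii) nor the dimension count.

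Where you go astray is the role of $\varepsilon$. Your own argument shows that (i)--(iii) hold for \emph{every} $\varepsilon>0$, so there is no sign or eigenvalue-continuity issue to guard against, and the fallback is unnecessary.

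In the paper, $\varepsilon$ is taken small for a combinatorial reason. Every path edge $\{i,i+1\}$ receives a nonzero block $\Delta_{i,i+1}=W_{i+1,1}^\top D_iW_{1,i}$, a product of invertible matrices under Assumption~\ref{ass:graph}. One must therefore ensure $L_{i,i+1}+\varepsilon\Delta_{i,i+1}\neq0$. Otherwise the update silently deletes an existing edge: $E^+=E\cup\{e\}$ fails, even though $E_{\text{add}}=\{e\}$ still holds literally. The operation would then no longer be the elementary ``add exactly $e$'' step.

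Your proposal never checks that the existing path edges survive. The fix is one line. Each path edge lies in $E$, so $L_{i,i+1}\neq0$, and any $0<\varepsilon<\min_i\|L_{i,i+1}\|/\|\Delta_{i,i+1}\|$ works. Equivalently, at most $q-1$ values of $\varepsilon$ are excluded.
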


\begin{pf}
By $2$-vertex-connectivity, the path with vertex sequence $j=1,2,\dots,q=k$ exists. Triangulating the cycle $C=(1,2,\dots,q,1)$ yields a set of triangles. Each edge of the cycle receives a nonzero matrix weight contribution from a single triangle, while each diagonal receives contributions from two triangles and is canceled by condition \eqref{eq:cancel}. Consequently, the nonzero off-diagonal blocks $\Delta_{ij}$ of $\Delta$ are precisely the edge weights of the cycle $C$. By further scaling all $D_i$ with a sufficiently small common factor $\varepsilon>0$, we ensure that the contributions to the original edges remain small enough to keep them nonzero, while the new edge $e$ receives a nonzero contribution; hence only the edge $e$ is added to $G$. Since all geometric factors defined in \eqref{eq:recurrence} are positive, then $\Delta \succeq 0$; moreover, any principal submatrix $\Delta_{ff}$ is also positive semidefinite. Since $L_{ff}\succ0$ and $\Delta_{ff} \succeq 0$, we obtain $L^+_{ff}=L_{ff}+\varepsilon\Delta_{ff}\succ0$. The remainder of the proof that $L^+$ has a formation spectrum follows the same structure as Theorem~\ref{thm:agent_joining}.
\end{pf}

\subsubsection{Construction feasibility independent of triangulation}
One might worry that the above analysis depends on the particular triangulation chosen. Since the number of triangulations of a cycle with \(q\) vertices is the \((q-2)\)-th Catalan number \cite{Loera2010}, which grows exponentially with \(q\), searching over all possible triangulations for one admitting positive definite design matrices is computationally prohibitive. The following result shows that such a search is unnecessary: the feasibility condition is intrinsic to the cycle itself and independent of how it is triangulated.


\begin{theorem}
\label{thm:triangulation_equiv}
Let $C = (1,2,\dots,q,1)$ be a cycle with nominal configuration $(\tilde p,R)$ satisfying Assumption~\ref{ass:graph}. Then either every triangulation of $C$ admits positive definite diagonal matrices $\{D_i\}_{i=2}^{q-1}$ (as used in the construction $\Delta$ defined in \eqref{eq:cycleL}) satisfying \eqref{eq:cancel}, or none does. In other words, the existence of such matrices is independent of the chosen triangulation.
\end{theorem}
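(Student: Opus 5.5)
The plan is to make the cancellation conditions \eqref{eq:cancel} explicit coordinatewise for an \emph{arbitrary} triangulation. The goal is to reduce the existence of positive definite $D_T$ to a sign condition on oriented triangle products. Then we check that this sign condition is preserved under diagonal flips, and conclude from the connectivity of the flip graph of triangulations of a polygon.

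\textbf{Step 1 (local form of a triangle block).} For a triangle $T=(a,b,c)$ with design matrix $D_T$, compute the off-diagonal blocks of $L^{abc}(D_T)$. Since $W_{xy}=\diag(\tilde p_{xy,R})R^\top$ and $R^\top R=I_d$, each block $W_{bc}^\top D_T W_{ca}$ equals $R\,\diag(\cdot)\,R^\top$ with $l$-th entry $d_T^l\,\tilde p^l_{bc,R}\tilde p^l_{ca,R}$. So in the rotated coordinates everything decouples into $d$ scalar problems, and we fix one coordinate $l$. Write $x_a=\tilde p^l_{a,R}$ and introduce the oriented product
\begin{equation*}
\sigma_T=(x_a-x_b)(x_b-x_c)(x_c-x_a),
\end{equation*}
where $(a,b,c)$ is listed in the cyclic order induced by $C$. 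Then the weight that $T$ puts on its side $\{a,b\}$ is, up to a fixed sign convention, $d_T^l\sigma_T/(x_a-x_b)$. By Assumption~\ref{ass:graph}, $\sigma_T\neq0$.

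\textbf{Step 2 (cancellation forces a constant).} An interior diagonal $\{a,c\}$ is shared by two triangles $T,T'$, and the polygon orientation traverses it in opposite directions in $T$ and $T'$. Hence \eqref{eq:cancel} becomes $d_T^l\sigma_T=d_{T'}^l\sigma_{T'}$. The dual graph of any triangulation is a tree, in particular connected. Therefore $d_T^l\sigma_T=\kappa^l$ is a common constant for all triangles, and conversely any $\kappa^l$ gives a solution. (As a byproduct, the resulting weight on each cycle edge is $\kappa^l/(x_a-x_{a+1})$, so $\Delta$ itself is triangulation-independent.) Consequently, positive definite $D_T$ exist if and only if, for every $l$, all $\sigma_T$ of the triangulation share one sign. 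This recovers the positivity of the $\gamma$'s in \eqref{eq:recurrence} for the star triangulation.

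\textbf{Step 3 (flip invariance, main step).} Any two triangulations of a $q$-gon are connected by a sequence of diagonal flips, a standard fact. It therefore suffices to show the following. If $(a,b,c)$ and $(a,c,d)$, with $a,b,c,d$ in cyclic order, have $\sigma$'s of the same sign, then so do $(a,b,d)$ and $(b,c,d)$, with the same common sign. The key observation is that $\operatorname{sign}\sigma_{(a,b,c)}$ is the cyclic orientation of three distinct reals: it is positive exactly when $(a,b,c)$ is a cyclic rotation of the increasing order or of the decreasing order, depending on a fixed convention. Equal orientation of $(a,b,c)$ and $(a,c,d)$ forces the four reals $x_a,x_b,x_c,x_d$, read in the order $a,b,c,d$, to be a cyclic rotation of a monotone sequence. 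This can be checked by placing $x_a$ and splitting into cases on the position of $x_c$ relative to it. Any sub-triple of a cyclically monotone sequence, taken in cyclic order, has that same orientation, which gives the claim for $(a,b,d)$ and $(b,c,d)$.

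I expect this four-point sign verification to be the main obstacle; it is elementary but requires careful bookkeeping of orientation conventions. Chaining flips, then, same-sign feasibility holds for one triangulation if and only if it holds for all, for each coordinate $l$, which proves Theorem~\ref{thm:triangulation_equiv}.
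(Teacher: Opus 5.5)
Your proof is correct, but it is organized differently from the paper's. The paper works only inside one quadrilateral $a,b,c,d$. It writes the two ratios forced by \eqref{eq:cancel}, $\gamma_{ac}^l=d^l_{acd}/d^l_{abc}$ and $\gamma_{bd}^l=d^l_{bda}/d^l_{bcd}$, checks that $\gamma_{ac}^l\gamma_{bd}^l=(\tilde p^l_{bc,R}/\tilde p^l_{da,R})^2>0$, and then chooses $d^l_{bcd}>0$ freely.

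You globalize first. Writing each side weight as $d_T^l\sigma_T/(x_u-x_w)$ turns every cancellation into $d_T^l\sigma_T=d_{T'}^l\sigma_{T'}$. The dual tree then forces $d_T^l\sigma_T=\kappa^l$ for all triangles, so a triangulation is feasible exactly when all its oriented products $\sigma_T$ share one sign. Flip invariance then reduces to transitivity of the cyclic orientation of four reals.

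What your route buys is that it accounts for the triangles outside the flipped quadrilateral. For $q\ge 5$, at least one of the sides $ab,bc,cd,da$ is a diagonal of the big polygon. So $d^l_{bcd}$ (or $d^l_{bda}$) is not free but pinned to $\kappa^l/\sigma$, and the two new triangles must carry the \emph{same} sign as the old ones, not merely the same sign as each other. The paper's product identity only gives the latter; your four-point cyclic-order step gives the former.

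You also get, as a byproduct, that $\Delta$ itself is independent of the triangulation. Your cyclic-order observation could even replace flips altogether: by ear-removal induction, feasibility of any single triangulation is equivalent to $(x^l_1,\dots,x^l_q)$ being a cyclic rotation of a strictly monotone sequence for each $l$. That condition mentions no triangulation. The paper's route is shorter and stays in the $\gamma$-notation that Algorithm~\ref{alg:edge_addition} uses for pruning.
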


\begin{pf}
Any two triangulations are connected by a finite sequence of edge flips \cite{Loera2010}. It suffices to show that a single flip preserves existence.

Consider a quadrilateral with vertices $a,b,c,d$ in cyclic order. Two triangulations are possible: $T_1$ with diagonal $(a,c)$ (triangles $\triangle_{abc}$ and $\triangle_{acd}$) and $T_2$ with diagonal $(b,d)$ (triangles $\triangle_{bcd}$ and $\triangle_{bda}$). Note that the local matrix $L^{xyz}=M_{xyz}^\top D_{xyz} M_{xyz}$ for a triangle is invariant under vertex permutation. Write $d_{abc}^l$, $d_{acd}^l$, $d_{bcd}^l$, $d_{bda}^l$ for the $l$-th diagonal entries of the design matrices $D$ associated with each triangle. Then \eqref{eq:cancel} yields:
\begin{equation}
d_{acd}^l = \gamma_{ac}^l d_{abc}^l,\quad 
d_{bda}^l = \gamma_{bd}^l d_{bcd}^l,
\end{equation}
where
\begin{equation}
\gamma_{ac}^l = \frac{\tilde p_{bc,R}^l \tilde p_{ba,R}^l}{\tilde p_{cd,R}^l \tilde p_{da,R}^l},\qquad 
\gamma_{bd}^l = \frac{\tilde p_{cd,R}^l \tilde p_{bc,R}^l}{\tilde p_{da,R}^l \tilde p_{ba,R}^l}.
\end{equation}
Assumption~\ref{ass:graph} guarantees that the geometric factors $\gamma_{ac}^l$ and $\gamma_{bd}^l$ are well-defined and non-zero. A direct computation shows $\gamma_{ac}^l \gamma_{bd}^l = (\tilde p_{bc,R}^l / \tilde p_{da,R}^l)^2 > 0$ and $\gamma_{ac}^l / \gamma_{bd}^l > 0$, so $\gamma_{ac}^l$ and $\gamma_{bd}^l$ have the same sign. If $T_1$ admits positive definite solutions, then $d_{abc}^l>0$, $d_{acd}^l>0$, and $\gamma_{ac}^l>0$. Hence $\gamma_{bd}^l>0$. Choosing any $d_{bcd}^l>0$ and setting $d_{bda}^l = \gamma_{bd}^l d_{bcd}^l$ yields positive entries, giving positive definite solutions for $T_2$. The converse follows by symmetry. 
\end{pf}

\begin{algorithm}[htbp]
	\caption{Distributed Edge Addition Protocol}
	\label{alg:edge_addition}
	\begin{algorithmic}[1]
		\REQUIRE $(G, \tilde{p}, R)$, $L$; agents $j,k$ with edge $\{j,k\}$ to be added.
		
		\STATE \textbf{Path discovery:} Agent $k$ establishes communication with $j$, exchange their nominal positions $\tilde{p}_j$, $\tilde{p}_k$, and update their neighbor sets: $N_j = N_j \cup \{k\}$, $N_k = N_k \cup \{j\}$.
		
		\STATE Agent $k$ initiates a probe packet containing $\mathbb{P} = [j,k]$, $\tilde{\mathbb{P}} = [\tilde{p}_j, \tilde{p}_k]$, $\Gamma = [~]$ and sends it to all neighbors $v \in N_k \setminus \{j\}$.
		
		\WHILE{probe not reached $j$}
			\STATE At current agent $v$ with packet ($\mathbb{P}=[j,\dots, f]$, $\tilde{\mathbb{P}}=[\tilde{p}_j,\dots, \tilde{p}_f]$, $\Gamma$):
			\FOR{each $w \in N_v$}
				\IF{$w = j$}
					\STATE Forward probe directly to $j$ with packet $[\mathbb{P}, v]$, $[\tilde{\mathbb{P}}, \tilde{p}_v]$, and $\Gamma$.
				\ELSIF{$w \notin \mathbb{P}$}
					\STATE For each coordinate $l=1,\dots,d$, compute the geometric factor
					\[
					\gamma^l = \frac{\tilde p_{f,v,R}^l \tilde p_{f,j,R}^l}{\tilde p_{v,w,R}^l \tilde p_{w,j,R}^l}.
					\]
					\IF{$\gamma^l > 0$ for all $l$}
					\STATE Forward probe to $w$ with updated packet $[\mathbb{P}, v]$, $[\tilde{\mathbb{P}}, \tilde{p}_v]$, and $[\Gamma, \gamma]$.
					\ENDIF
				\ENDIF
			\ENDFOR
		\ENDWHILE
		\STATE \textbf{$\Delta$ computed by $j$:} The probe reaches $j$ carrying the complete cycle information in $\mathbb{P}$. Using $\tilde{\mathbb{P}}$ and $\Gamma$, $j$ computes $\Delta$ via \eqref{eq:cycleL}, where $D_i$, $i\in\{2,...,q-1\}$ are constructed from arbitrary $D_2\succ0$ and $\Gamma$ via \eqref{eq:recurrence}. If multiple feasible paths exist, just use the first one.
		\STATE \textbf{Edge weight distribution:} $j$ sends an update packet along $\mathbb{P}$ containing the computed edge weights for all edges in the cycle $C$. Each involved agent updates its local copy of the weights for its incident edges in $C$.
	\end{algorithmic}
\end{algorithm}

\subsubsection{Finding a feasible cycle}
Theorem~\ref{thm:triangulation_equiv} liberates us from the exponential tyranny of triangulation choices. Yet a new challenge emerges: in large-scale networks, the edge $e$ may belong to exponentially many cycles, each corresponding to a distinct path from $j$ to $k$. Different cycles yield different geometric factors, and only those with all $\gamma_i^l>0$ are feasible. Thus, even after eliminating the search over triangulations, we still face an exponential search over paths—a problem that seems equally intractable.

Crucially, while any triangulation suffices for the final construction, only those with a local structure can guide the search itself. The star-shaped triangulation with a common vertex possesses exactly this property: each geometric factor $\gamma_i^l$ depends only on four vertices—the current vertex, its predecessor, the candidate successor, and the common vertex (see \eqref{eq:recurrence}). This observation leads to a dynamic pruning strategy, implemented in Algorithm~\ref{alg:edge_addition}: at each step, any extension that violates $\gamma_i^l>0$ is discarded. Because the condition is local, pruning can be performed in a fully distributed manner. The pruning is both sound—any feasible path satisfies the condition at every step and therefore survives—and effective, eliminating large numbers of infeasible paths early.

When the geometric factors are not all positive, the single-edge addition via cycle triangulation fails. In this case, one can add a complete triangle $\triangle_{j,k,v}$ with a neighbor $v$ of $j$ or $k$; set $\Delta$ to the perturbation corresponding to $\triangle_{j,k,v}$. This adds two edges simultaneously and always preserves the spectral properties. 

\begin{remark}
Unlike \cite{Fathian2019}, which only achieves regular polygon formations under cyclic sensing, our construction realizes both regular and non‑regular polygon formations in a unified way. The key is to build a Laplacian for a cycle via triangulation. This construction requires a cycle $C$ whose geometric factors in \eqref{eq:recurrence} are all positive. Once such a cycle is found, any triangulation of $C$ succeeds—a fact established by Theorem~\ref{thm:triangulation_equiv}. The remaining combinatorial challenge of finding a feasible cycle reduces to a path search problem, which can be solved efficiently by our distributed algorithm via pruning the search space using local positivity checks enabled by the locality of the geometric factors. Moreover, unlike existing methods tailored to minimally rigid graphs \cite{Jing2019, Cao2020, Chen2021, Farid2024}, our edge-addition strategy allows the formation to evolve seamlessly from any $2$-vertex-connected graph up to the complete graph while preserving the required spectral properties.
\end{remark}

\subsection{Agent Leaving}
\label{sol:agent_leaving}
We now address the agent leaving operation in Problem~\ref{pbm:topology_update}. Consider vertex $u$ to be removed from the $2$-vertex-connected graph $G$, with $|N_u|\in[2, |V|-1]$. Order all the vertices as $U = V \setminus (N_u \cup \{u\})$, $N = N_u$, and $u$. In this ordering, the Laplacian $L$ admits the structure
\begin{equation}
L = \begin{bmatrix}
L_{UU} & L_{UN} & 0 \\
L_{NU} & L_{NN} & L_{Nu} \\
0 & L_{uN} & L_{uu}
\end{bmatrix},
\label{eq:L_block_leaving}
\end{equation}
where $L_{uu} \succ 0$. Since $\ker(L) = \varPi(\tilde{p},R)$, we have $\varPi(\tilde{p},R) \subseteq \ker(L_u)$ with $L_u = \begin{bmatrix} 0 & L_{uN} & L_{uu} \end{bmatrix} \in \mathbb{R}^{d \times dn}$ denoting the block row corresponding to $u$.

A naive approach would simply delete $u$ and all its incident edges from $G$ and zero out the corresponding blocks in $L$. Such naive removal, however, may cause that the remaining graph loses $2$-vertex-connectivity, and the resulting Laplacian loses the formation spectrum. To overcome this, we follow the update formula \eqref{eq:L_update} and set $\bar{\Delta} = 0$, while $\Delta$ is chosen as
\begin{equation}
\Delta = -L_u^{\top} L_{uu}^{-1} L_u.
\label{eq:deltau}
\end{equation}
The Laplacian of the updated formation $(G^+,\tilde{p}^+,R)$ is
\begin{equation}
L^+ = \mathcal{E}\bigl(L - L_u^{\top} L_{uu}^{-1} L_u\bigr).
\label{eq:LVL}
\end{equation}

\begin{theorem} \label{thm:agent_leaving}
Let $F=(G,\tilde{p},R)$ satisfy Assumption~\ref{ass:graph} and $L$ be a Laplacian with a formation spectrum. Removing agent $u \in V$ and all edges incident to $u$ using $\bar{\Delta}=0$ and $\Delta$ as in \eqref{eq:deltau} yields a formation $F^+ = (G^+,\tilde{p}^+,R)$ whose Laplacian $L^+$ given by \eqref{eq:LVL} has a formation spectrum.
\end{theorem}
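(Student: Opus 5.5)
The plan is to recognize $L^+$ as the Schur complement of $L$ with respect to the block $L_{uu}$ and to check the three conditions of Definition~\ref{def:formation_spectrum} one at a time. First, note that $L_{uu}\succ0$ holds: pick any two leaders other than $u$; the principal submatrix $L_{uu}$ sits inside the corresponding follower block, so condition (iii) for $L$ gives $L_{uu}\succ0$. Also, $\Delta=-L_u^\top L_{uu}^{-1}L_u$ has $u$-th block row $-L_u L_{uu}^{-1}L_{uu}\cdot(\ldots)=-L_u$ after expansion, so $L+\Delta$ has zero $u$-th block row and column. Applying $\mathcal{E}$ therefore discards nothing, and $L^+=L_{\bar u\bar u}-L_{\bar u u}L_{uu}^{-1}L_{u\bar u}$, the Schur complement. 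Its nonzero off-diagonal blocks are supported on $E$ restricted to $V\setminus\{u\}$ plus the clique on $N_u$, so it is a valid Laplacian for the graph $G^+$ obtained by completing $N_u$. Its row sums vanish because $\mathbf{1}\otimes\tau\in\ker L$; the block symmetry is clear.

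For (i), use the standard fact that if $L\succeq0$ and $L_{uu}\succ0$, then the Schur complement is $\succeq0$. Concretely, $x^\top L^+x=\min_{y}[x;y]^\top L[x;y]\ge0$.

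For (ii), use the same variational identity. $L^+x=0$ iff $x^\top L^+x=0$, and this holds iff $[x;y^*]\in\ker L=\varPi(\tilde p,R)$ with $y^*=-L_{uu}^{-1}L_{u\bar u}x$. Hence $\ker L^+$ is the projection of $\varPi(\tilde p,R)$ onto the coordinates $V\setminus\{u\}$, which equals $\varPi(\tilde p^+,R)$. The map is injective because, under Assumption~\ref{ass:graph} and Lemma~\ref{lem:dimVARPi}, both spaces have dimension $2d$. The inclusion $\varPi(\tilde p^+,R)\subseteq\ker L^+$ follows by extending $p^+$ with the same $s,\tau$ to a $p\in\varPi(\tilde p,R)$ and noting $L^+p^+=(Lp)_{\bar u}=0$.

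For (iii), take any two leaders in $V\setminus\{u\}$ with follower set $F^+$. Then $Q^\top L^+Q$ equals the Schur complement of $L_{uu}$ in the principal submatrix of $L$ indexed by $F^+\cup\{u\}$. That submatrix is exactly the follower block of $L$ for the same two leaders, hence $\succ0$ by (iii) for $L$, and Schur complements of positive definite matrices are positive definite. The main obstacle I expect is not spectral but bookkeeping: confirming that the induced graph $G^+$ and the reduced configuration still satisfy the standing hypotheses, in particular Assumption~\ref{ass:graph}, which is inherited trivially for sub-configurations. Consistency with Lemma~\ref{lem:2vc_sp}, namely that $G^+$ is $2$-vertex-connected, then follows automatically from the established spectrum and need not be proved separately. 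I also need to note that $|N_u|\ge2$ ensures $u$ has at least one neighbor, so the construction is nontrivial.
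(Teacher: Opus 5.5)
Your proposal is correct and follows essentially the same route as the paper: both identify $L^+$ as the Schur complement of $L_{uu}$ in $L$, obtain (i) from positivity of that Schur complement, and obtain (iii) by recognizing the new follower block as the Schur complement of $L_{uu}$ in the old follower block for the same two leaders. The only real difference is in (ii), where you characterize $\ker L^+$ directly as the projection of $\ker L$ via the variational identity $x^\top L^+ x=\min_y [x;y]^\top L[x;y]$, whereas the paper proves $\varPi(\tilde p^+,R)\subseteq\ker L^+$ and closes the gap with rank additivity, $\operatorname{rank}(L^+)=\operatorname{rank}(L)-\operatorname{rank}(L_{uu})$, together with $\dim\varPi(\tilde p^+,R)=2d$; your route makes that dimension count, and your injectivity remark, unnecessary.
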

\begin{pf}
We first show $L^+ \succeq 0$. Write $L$ in block form as
\begin{equation}
L = \begin{pmatrix} A & B \\ B^\top & C \end{pmatrix}, \quad
L_u = \begin{pmatrix} B^\top & C \end{pmatrix},
\end{equation}
where $C = L_{uu}$ is symmetric positive definite and $A$ corresponds to the remaining agents. Then
\begin{equation}
L_u^{\top} L_{uu}^{-1} L_u = \begin{pmatrix} B C^{-1} B^\top & B \\ B^\top & C \end{pmatrix},
\end{equation}
so
\begin{equation}
M = L - L_u^{\top} L_{uu}^{-1} L_u = \begin{pmatrix} A - B C^{-1} B^\top & 0 \\ 0 & 0 \end{pmatrix}.
\end{equation}
Since $L \succeq 0$ and $C \succ 0$, the Schur complement $S = A - B C^{-1} B^\top \succeq 0$. Hence $L^+ = \mathcal{E}(M) = S \succeq 0$. Moreover, by the Schur complement rank theorem, $\operatorname{rank}(S) = \operatorname{rank}(L) - \operatorname{rank}(C) = (n-1)d - 2d$.

We next prove $Q^{\top} L^+ Q \succ 0$ for all $Q \in \mathcal{S}$. For any $Q \in \mathcal{Q}$ selecting two leaders with indices $\mathcal{L}$, let $\mathcal{F}$ include the follower indices. Partition $A$ and $B$ accordingly:
\begin{equation}
A = \begin{pmatrix} A_{\mathcal{F}\mathcal{F}} & A_{\mathcal{F}\mathcal{L}} \\ A_{\mathcal{L} \mathcal{F}} & A_{\mathcal{L}\mathcal{L}} \end{pmatrix}, \quad
B = \begin{pmatrix} B_{\mathcal{F}} \\ B_{\mathcal{L}} \end{pmatrix}.
\end{equation}
Then $Q^{\top} L^+ Q = A_{\mathcal{F}\mathcal{F}} - B_{\mathcal{F}} C^{-1} B_{\mathcal{F}}^{\top}$. In the original formation, the follower block for leaders $\mathcal{L}$ is
\begin{equation}
L_{ff} = \begin{pmatrix}
A_{\mathcal{F}\mathcal{F}} & B_{\mathcal{F}} \\
B_{\mathcal{F}}^{\top} & C
\end{pmatrix},
\end{equation}
where $L_{ff}\succ 0$ because $L$ has a formation spectrum. The Schur complement of $C$ in $L_{ff}$ is $A_{\mathcal{F}\mathcal{F}} - B_\mathcal{F} C^{-1} B_\mathcal{F}^{\top}$, which is therefore positive definite. Thus $Q^{\top} L^+ Q \succ 0$.

We now show $\ker(L^+) = \varPi(\tilde{p}^+,R)$. For any $p \in \varPi(\tilde{p},R)$, $L_u p = 0$ implies $(L - L_u^{\top} L_{uu}^{-1} L_u)p = Lp = 0$, so $\varPi(\tilde{p},R) \subseteq \ker(M)$. Removing $u$ gives $\varPi(\tilde{p}^+,R) \subseteq \ker(L^+)$. Since $\operatorname{rank}(L^+) = (n-1)d - 2d$, we have $\dim(\ker(L^+)) = 2d = \dim(\varPi(\tilde{p}^+,R))$, hence $\ker(L^+) = \varPi(\tilde{p}^+,R)$.

Thus $L^+$ has a formation spectrum. By Lemma \ref{lem:2vc_sp}, $G^+$ is $2$-vertex-connected. 
\end{pf}

The number of added edges satisfies $0 \le |E_{\text{add}}| \le \binom{|N_u|}{2}$, where the upper bound corresponds to forming a complete graph among the neighbors of the departing agent. The distributed Laplacian update protocol is similar to Algorithm~\ref{alg:agent_joining}, hence omitted.

\begin{remark}
Compared with existing works on graph biconnectivity maintenance \cite{Eswaran1976, Rosenthal1977, Ahmadi2006}, which focus exclusively on topological connectivity, our algorithm preserves the critical spectral properties of the Laplacian while ensuring that the resulting graph remains $2$‑vertex‑connected. Unlike the affine framework construction in \cite{Li2025}, which requires a predefined parent–child hierarchical structure, our method operates purely on local neighborhood information. Moreover, our approach imposes no restriction on the number of neighbors of the departing agent.
\end{remark}

\subsection{Edge Removal}
\label{subsec:edge_removal}
We now address the edge removal operation in Problem~\ref{pbm:topology_update}. By lemma \ref{lem:2vc_sp} and the fact that Laplacian $L$ has a formation spectrum, $G$ is $2$-vertex-connected. Removing $e=\{j,k\}\in E$ yields $H=G-e$, which may not satisfy the $2$-vertex-connectivity required for $G^{+}$. The following theorem characterizes the minimal number of compensatory edges required to restore $2$-vertex-connectivity.

\begin{theorem}\label{thm:main-edge-removal}
Let $G=(V,E)$ be a $2$-vertex-connected graph with $|V|\ge 4$, $e=\{j,k\}\in E$, and $H=G-e$ with $d_H(v)$ denoting the degree of $v$ in $H$. Then $H$ is $2$-vertex-connected if and only if there exist two cycles $C_1,C_2$ in $G$ such that $e\in E(C_1)\cap E(C_2)$ and $V(C_1)\cap V(C_2)=\{j,k\}$. Otherwise, the minimum number of new edges (distinct from $e$) needed to restore $2$-vertex-connectivity is 1 when $\max\{d_H(j),d_H(k)\}\ge 2$, and 2 when $d_H(j)=d_H(k)=1$.
\end{theorem}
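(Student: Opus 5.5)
The plan is to reduce both parts to Menger's theorem and to the block--cut-vertex tree of $H$. The key reformulation is that two cycles $C_1,C_2$ in $G$ with $e\in E(C_1)\cap E(C_2)$ and $V(C_1)\cap V(C_2)=\{j,k\}$ are the same thing as two internally vertex-disjoint $j$--$k$ paths $P_i=C_i-e$ in $H$. Each such path has length at least $2$, because the only $j$--$k$ edge of $G$ is $e$.

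\emph{The equivalence.}
\begin{itemize}
\item[($\Rightarrow$)] If $H$ is $2$-vertex-connected and $|V|\ge 4$, then Menger's theorem gives two internally disjoint $j$--$k$ paths in $H$. Closing each with $e$ yields $C_1,C_2$.
\item[($\Leftarrow$)] We must show $H-x$ is connected for every $x\in V$.
  \begin{itemize}
  \item If $x\in\{j,k\}$, then $H-x=G-x$, which is connected by Definition~\ref{def:2vc}.
  \item Otherwise $G-x$ is connected and $H-x=(G-x)-e$. Deleting $e$ could only disconnect $G-x$ by separating $j$ from $k$. But $x$ is an interior vertex of at most one of $P_1,P_2$, so the other path still joins $j$ and $k$ in $H-x$.
  \end{itemize}
\end{itemize}

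\emph{The structure of $H$ when it is not $2$-vertex-connected.} $H$ is connected, since a $2$-vertex-connected graph has no bridge. Because $H+e=G$ is $2$-vertex-connected, every cut vertex $c$ of $H$ must separate $j$ from $k$ in $H-c$. Two consequences follow:
\begin{itemize}
\item the block--cut tree of $H$ is a path $B_1,c_1,B_2,\dots,c_{m-1},B_m$;
\item $j$ lies in $B_1$ and $k$ lies in $B_m$, and neither is a cut vertex.
\end{itemize}
I will use the standard fact that adding an edge between a non-cut vertex of $B_1$ and a non-cut vertex of $B_m$ makes such a graph $2$-vertex-connected. This is the same cycle-closing argument as above: the new edge and the chain of blocks form a cycle through every cut vertex.

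\emph{The counts.} At least one new edge is always needed, since $H$ itself is not $2$-vertex-connected.
\begin{itemize}
\item \textbf{One edge suffices when $d_H(j)\ge 2$} (symmetric for $k$). All neighbours of $j$ lie in $B_1$, because $j$ is not a cut vertex. Hence $|B_1|\ge 3$, and $B_1$, as an end block, has exactly one cut vertex. So there is a non-cut vertex $j'\neq j$ in $B_1$. Adding $f=\{j',k\}\neq e$ restores $2$-vertex-connectivity.
\item \textbf{Two edges are needed when $d_H(j)=d_H(k)=1$.} Here $B_1=\{j,c_1\}$ and $B_m=\{c_{m-1},k\}$. A single new edge $f\neq e$ cannot be incident to both $j$ and $k$. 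So in $H+f$ one of them still has degree $1$, and deleting its unique neighbour isolates it. Since $|V|\ge 4$, this contradicts $2$-vertex-connectivity.
\item \textbf{Two edges suffice when $d_H(j)=d_H(k)=1$.} First add $\{j,c_{m-1}\}$. Then $j$ and $c_{m-1}$ both lie on a common cycle with every block $B_1,\dots,B_{m-1}$, so these merge into one block $B'$. The remaining structure is $B'$ joined at $c_{m-1}$ to the pendant edge $\{c_{m-1},k\}$. Next add $\{k,w\}$ for any $w\in B'\setminus\{c_{m-1}\}$ with $w\neq j$; such a $w$ exists because $|V|\ge 4$. This closes $k$ into a cycle with $B'$ and yields a $2$-vertex-connected graph.
\end{itemize}

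I expect the main obstacle to be stating rigorously that the block--cut tree is a path with $j,k$ interior to the end blocks, together with the merging claim used in the two-edge case. I would handle both with the single lemma: \emph{if $G'$ is connected and every cut vertex of $G'$ lies on a cycle created by a new edge $\{a,b\}$ with $a,b$ non-cut vertices in different blocks, then $G'+\{a,b\}$ has no cut vertex on the $a$--$b$ block path.}
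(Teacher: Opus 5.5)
Your proof is correct. The equivalence part is essentially the paper's argument: Menger for necessity, and for sufficiency $H-x=G-x$ when $x\in\{j,k\}$, while otherwise one of the two internally disjoint paths avoids $x$.

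For the counting part you take a genuinely different route. The paper stays local. In Case (ii) it shows, by a three-way contradiction on pairs of neighbours of $j$, that some $w\in N_j(H)\setminus N_k(H)$ is not a cut vertex of $H$. It then checks $H+\{k,w\}$ vertex by vertex, using only that every cut vertex of $H$ separates $j$ from $k$. In Case (iii) it shows the unique neighbours of $j$ and $k$ are distinct, uses the same leaf argument as you for the lower bound, and omits the verification for $f_1,f_2$. You instead extract the global structure (the block--cut tree is a path, with $j,k$ non-cut vertices of the two end blocks), after which one cycle-closing argument handles both constructions.

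The structural step is short, and you should write it out, since the lemma you propose does not cover it. For each cut vertex $c$, $H-c$ has exactly two components, because adding the single edge $e$ reconnects it; so every cut vertex lies in exactly two blocks. For an end block $B$ with cut vertex $c$, $B-c$ is a component of $H-c$ and therefore contains $j$ or $k$. This leaves room for only two end blocks, so the tree is a path.

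Each route buys something different. Yours describes all admissible compensatory edges and actually verifies Case (iii), which the paper leaves as ``analogous''. The paper's yields a $w$ adjacent to $j$, which is what its Laplacian update for Case (ii) needs: the triangle $\triangle_{j,k,w}$ must reuse the existing edge $\{j,w\}$ so that only $\{k,w\}$ is added. Such a $w$ can also be found from two-hop information. Your $j'$ need not be a neighbour of $j$, but your structure gives one immediately. The at least two neighbours of $j$ lie in $B_1$, which has a single cut vertex, and a non-cut vertex of $B_1$ cannot be adjacent to $k\in B_m$.

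Two small repairs are needed in Case (iii):
\begin{enumerate}
\item State $m\ge 3$, i.e.\ $c_1\neq c_{m-1}$; this is the paper's claim that the two unique neighbours differ. It holds because $m=2$ would force $V=\{j,c_1,k\}$, and it is needed for $\{j,c_{m-1}\}$ to be a new edge.
\item Your merging lemma assumes non-cut endpoints, but $c_{m-1}$ is a cut vertex of $H$. Either apply the lemma in $H-k$, or simply add $\{j,c_{m-1}\}$ and $\{k,c_1\}$ together. Then for every cut vertex $c_i$, at least one of these edges survives in $H-c_i$ and joins its two components.
\end{enumerate}
With $w=c_1$, your two edges are exactly the paper's $f_1,f_2$.
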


\begin{pf}
We prove it by considering three cases.

\textbf{Case (i): $H$ is $2$-vertex-connected.}
    \textit{(Necessity)} If $H$ is $2$-vertex-connected, then by Menger's theorem there exist two internally vertex-disjoint paths $P_1,P_2$ with endpoints $j$ and $k$. Adding the edge \(e\) to \(P_1\) and \(P_2\) yields two cycles \(C_1\) and \(C_2\), both containing \(e\) and satisfying $V(C_1) \cap V(C_2) = \{j,k\}$.

    \textit{(Sufficiency)} Conversely, suppose there are two cycles $C_1,C_2$ with $e\in E(C_1)\cap E(C_2)$ and $V(C_1)\cap V(C_2)=\{j,k\}$. Let $P_i=C_i-e$ ($i=1,2$); then $P_1,P_2$ are internally vertex-disjoint paths. We prove that $G-e$ is $2$-vertex-connected by showing that $G-e-x$ is connected for every $x \in V$. 
    
    \emph{Subcase 1: $x = j$ (symmetric for $x = k$).} 
    Then $G-e-j = G-j$. Since $G$ is $2$-vertex-connected, $G-e-j$ is connected.

    \emph{Subcase 2: $x \notin \{j, k\}$.} 
    The graph $G-x$ is connected because $G$ is $2$-vertex-connected. If $G-x-e$ is disconnected then $G-x$ contains no path between $j$ and $k$ other than the edge $e$. However, since $x \notin \{j, k\}$ and $P_1, P_2$ are internally vertex-disjoint, at most one of them contains $x$. Hence at least one of $P_1, P_2$ remains a path between $j$ and $k$ in $G-x$, a contradiction. Thus $(G-e)-x$ is connected.


\textbf{Case (ii): $H$ is not $2$-vertex-connected and $\max\{d_H(j)$ $,d_H(k)\}\ge 2$.}
Without loss of generality, let $d_H(j)\ge 2$. We first show that there exists a vertex $w\in N_j(H)\setminus N_k(H)$ which is not a cut vertex of $H$. Suppose, for contradiction, that no such $w$ exists. Let $u,v\in N_j(H)$ be distinct. The following statements hold.
    \begin{itemize}
    \item If $u,v\in N_k(H)$, then the cycles $(j,u,k,j)$ and $(j,v,k,j)$ satisfy the condition of Case (i), contradiction.
    
    \item If $u$ is a cut vertex but $v\in N_k(H)$, then $H-u$ separates $j$ and $k$ (since $H-u = G-u-e$ is disconnected and $G-u$ is connected, $H-u$ has exactly two components separated by $e$). However, the path from $j$ to $v$ to $k$ connects them in $H-u$, a contradiction. The symmetric case is similar.

    \item If both $u$ and $v$ are cut vertices of $H$, then $H-u$ and $H-v$ separates $j$ and $k$ (similar to the proof above). Because $G$ is $2$-vertex-connected, there is a path $P$ from $j$ to $k$ in $G$ avoiding $e$. Then $P$ lies in $H$ and must pass through both $u$ and $v$ (otherwise it contradicts that $H-u, H-v$ separate $j,k$). Then $P$ contains a subpath from $v$ to $k$ avoiding $u$, so in $H-u$ the edge $\{j,v\}$ together with this subpath connects $j$ to $k$, contradiction.
    \end{itemize}
    Thus, such a $w$ exists. Now add edge \(\{k, w\}\) to \(H\), yielding \(H'\). We show $H'$ is $2$-vertex-connected. Take any $x\in V$,
    \begin{itemize}
        \item if $x\in\{j,k\}$, then $H'-x$ contains $G-x$ (connected);
        \item if $x\notin\{j,k\}$ and $x$ is not a cut vertex of $H$, then $H-x$ connected implies $H'-x$ connected;
        \item if $x\notin\{j,k\}$ and $x$ is a cut vertex of $H$, then $H-x$ separates $j$ and $k$ (similar to the proof above). Since $w$ is a non‑cut vertex, $w\neq x$, and $w \in N_j(H)$ lies in the component of $j$. The new edge $f=\{k,w\}$ thus connects the two components, so $H'-x$ is connected.
    \end{itemize}

\textbf{Case (iii): $H$ is not $2$-vertex-connected and $d_H(j)=d_H(k)=1$.} Let $v$ and $w$ be the unique neighbours of $j$ and $k$ in $H$, respectively. We first show $v\neq w$. If $v=w$, then $G$ contains the triangle $\triangle_{j,k,v}$. Since $|V|\ge 4$, there exists a vertex $x\neq j,k,v$. Because $G$ is $2$-vertex-connected, $G-v$ is connected, so $x$ must be adjacent to $j$ or $k$ in $G$. But then $d_H(j)\ge 2$ or $d_H(k)\ge 2$, contradicting $d_H(j)=d_H(k)=1$. Thus $v\neq w$. Adding a single edge to $H$ can eliminate at most one of the leaves $j,k$, leaving the other leaf’s neighbor as a cut vertex. Hence adding edges $f_1=\{j,w\}$ and $f_2=\{k,v\}$ to \(H\) yields \(H'\), which is  $2$-vertex-connected; the verification is analogous to Case~(ii) and omitted.
\end{pf}

\begin{figure}[htbp]
\centering
\begin{subfigure}[b]{0.32\linewidth}
    \centering
    \begin{tikzpicture}[scale=0.6, every node/.style={circle, fill=black, inner sep=1pt, minimum size=2pt}]
    \node (j) at (3,0) [label=right:$j$] {};
    \node (k) at (2,1.5) [label=above:$k$] {};
    \node (a) at (1,1.5) [label=above:$a$] {};
    \node (b) at (0,0) [label=left:$b$] {};
    \node (c) at (1,-1.5) [label=below:$c$] {};
    \node (w) at (2,-1.5) [label=below:$w$] {};
    
    \draw[red, dashed, thick] (j) -- (k);
    
    \draw (k) -- (a) -- (b) -- (j);
    \draw (b) -- (c) -- (w) -- (j);
    
    \draw[green,dashed, thick] (k) -- (w) node[midway, above left, draw=none, fill=none] {f};

    \end{tikzpicture}
    \caption*{(a)}
\end{subfigure}
\hfill
\begin{subfigure}[b]{0.32\linewidth}
    \centering
    \begin{tikzpicture}[scale=0.6, every node/.style={circle, fill=black, inner sep=1pt, minimum size=2pt}]
    \node (j) at (2,1.5) [label=above:$j$] {};
    \node (k) at (3,0) [label=right:$k$] {};
    \node (v) at (1,1.5) [label=above:$v$] {};
    \node (b) at (0,0) [label=left:$b$] {};
    \node (c) at (1,-1.5) [label=below:$c$] {};
    \node (w) at (2,-1.5) [label=below:$w$] {};
    
    \draw[red, dashed, thick] (j) -- (k);
    
    \draw (j) -- (v) -- (b) -- (c) -- (w) -- (k);
    
    \draw[green,dashed, thick] (j) -- (w) node[midway, below left, draw=none, fill=none] {$f_1$};
    \draw[green,dashed, thick] (k) -- (v) node[midway, left, draw=none, fill=none] {$f_2$};
    
    \end{tikzpicture}
    \caption*{(b)}
\end{subfigure}
\hfill
\begin{subfigure}[b]{0.32\linewidth}
    \centering
    \begin{tikzpicture}[scale=0.6, every node/.style={circle, fill=black, inner sep=1pt, minimum size=2pt}]
    \node (j) at (0,0) [label=left:$j$] {};
    \node (k) at (3,0) [label=right:$k$] {};
    \node (a) at (1,1.5) [label=above:$a$] {};
    \node (b) at (2,1.5) [label=above:$b$] {};
    \node (c) at (1,-1.5) [label=below:$c$] {};
    \node (d) at (2,-1.5) [label=below:$d$] {};
    
    \draw[red, dashed, thick] (j) -- (k);
    
    \draw (j) -- (a) -- (b) -- (k);
    \draw (j) -- (c) -- (d) -- (k);
    \end{tikzpicture}
    \caption*{(c)}
\end{subfigure}

\caption{Three cases of restoring $2$-vertex connectivity after removing edge $\{j,k\}$ from $G$. (a) Adding one compensatory edge $\{k,w\}$ suffices to restore $2$-vertex connectivity. (b) Two compensatory edges $\{j,w\}$ and $\{k,v\}$ are required. (c) No compensatory edge is needed.}
\label{fig:edge_removal_cases}
\end{figure}

For edge removal, we take $\mathcal{E}$ as the identity and $\bar{\Delta} = 0$, so the update reduces to $L^+ = L + \Delta$. We now translate each case in Theorem~\ref{thm:main-edge-removal} into a concrete construction of $\Delta \succeq 0$, presenting them in the order (ii), (iii), (i) for conciseness.

\textbf{Case (ii): $H$ is not $2$-vertex-connected and $\max\{d_H(j)$ $,d_H(k)\}\ge 2$.} A single compensatory edge suffices to restore $2$-vertex connectivity (see Fig.~\ref{fig:edge_removal_cases}(a)). Without loss of generality, assume $d_H(j) \ge 2$ and let $w \in N_j(H) \setminus N_k(H)$ be a non-cut vertex of $H$ (its existence is guaranteed in the proof of Theorem~\ref{thm:main-edge-removal} and such $w$ can be determined locally via two-hop neighbor information of agent $j$ \cite{Wang2026}). For each candidate $w$, seek $D \succ 0$ such that the perturbation $L^{jkw}_{\text{pad}}(D)$ (for $\triangle_{j,k,w}$ as in \eqref{eq_pad}) satisfies:
\begin{equation}
L_{kj}+W_{wj}^\top D W_{kw}=0, \quad L_{jw} + W_{kw}^\top D W_{jk} \neq 0.
\end{equation}
The first condition cancels edge $\{j,k\}$, while the second preserves edge $\{j,w\}$. If such $D \succ 0$ exists, we set 
\begin{equation}
\Delta = L^{jkw}_{\text{pad}}(D),
\end{equation}
completing the edge removal operation for this case. Here $E_{\text{add}} = \{\{k,w\}\}$ and $|E_{\text{add}}|=1$ is minimal by Theorem~\ref{thm:main-edge-removal}. The corresponding distributed Laplacian update protocol is similar to Algorithm \ref{alg:agent_joining}, and is therefore omitted.

\textbf{Case (iii): $H$ is not $2$-vertex-connected and $d_H(j)=d_H(k)=1$.} Let $v$ and $w$ be the unique neighbors of $j$ and $k$ in $H$, with $v \neq w$ (see Fig.~\ref{fig:edge_removal_cases}(b)). Find $D_1 \succ 0$ and $D_2 \succ 0$ such that the perturbations $L^{jkw}_{\text{pad}}(D_1)$ (for $\triangle_{j,k,w}$ as in \eqref{eq_pad}) and $L^{jkv}_{\text{pad}}(D_2)$ (for $\triangle_{j,k,v}$ as in \eqref{eq_pad}) satisfy:
\begin{equation}
\begin{split}
&L_{jk} + W_{kw}^\top D_1 W_{wj} + W_{kv}^\top D_2 W_{vj} = 0, \\ 
&L_{jv} + W_{kv}^\top D_2 W_{jk} \neq 0, \quad L_{kw} + W_{wj}^\top D_1 W_{jk} \neq 0.
\end{split}
\end{equation}
The first condition cancels edge $\{j,k\}$, while the second and third preserve edges $\{j,v\}$ and $\{k,w\}$, respectively. If such $D_1, D_2 \succ 0$ exist, we set 
\begin{equation}
\Delta = L^{jkw}_{\text{pad}}(D_1) + L^{jkv}_{\text{pad}}(D_2),
\end{equation}
completing the edge removal operation for this case. Here $E_{\text{add}} = \{\{j,w\}, \{k,v\}\}$ and $|E_{\text{add}}|=2$ is minimal by Theorem~\ref{thm:main-edge-removal}. The corresponding distributed Laplacian update protocol is similar to Algorithm \ref{alg:agent_joining}, and is therefore omitted.

\textbf{Case (i): $H$ is $2$-vertex-connected (see Fig.~\ref{fig:edge_removal_cases}(c)).} Since $H$ is already $2$-vertex-connected, no compensatory edges are needed. However, the Laplacian $L$ still contains the contribution of edge $\{j,k\}$, which must be algebraically eliminated. We achieve this by considering an inverse operation of the edge addition method presented in Section~\ref{subsec:edge_addition}. Specifically, we construct a cycle-based perturbation matrix that exactly removes $\{j,k\}$ without introducing any new edges. The key modifications to Algorithm~\ref{alg:edge_addition} are as follows.

At agent $k$, for each $v \in N_k \setminus \{j\}$, we seek a design matrix $D_k \succ 0$ such that the perturbation $L^{jkv}_{\text{pad}}(D_k)$ (constructed for triangle $\triangle_{j,k,v}$ as in \eqref{eq_pad}) satisfies:
\begin{equation}
L_{kj}+W_{vj}^\top D_k W_{kv}=0, \quad L_{kv}+W_{vj}^\top D_k W_{jk} \ne 0.
\end{equation}
The first condition cancels edge $\{j,k\}$, while the second preserves edge $\{v,k\}$. The perturbation also introduces a nonzero contribution to edge $\{j,v\}$, which will be canceled in a later step. If such $D_k \succ 0$ exists, agent $k$ sends a probe to $v$ containing $\mathbb{D}=[D_k]$ and the information in Algorithm~\ref{alg:edge_addition}. 

At current agent $v$ with packet ($\mathbb{P}=[j,\dots, f]$, $\tilde{\mathbb{P}}=[\tilde{p}_j,\dots, \tilde{p}_f]$, $\mathbb{D}=[D_k, \dots, D_f]$, $\Gamma$). For each $w \in N_v$:
\begin{itemize}
\item If $w = j$, the probe is forwarded directly to $j$ with the current packet after verifying that 
\begin{equation}
    L_{jv} + W_{fv}^\top D_f W_{jf} \neq 0.
\end{equation}
\item Else if $w \notin \mathbb{P}$, we seek a design matrix $D_v \succ 0$ such that the perturbation $ L^{jvw}_{\text{pad}}(D_v) $ (constructed for triangle $\triangle_{j,v,w}$ as in \eqref{eq_pad}) satisfies:
\begin{equation}
\begin{split}
    & W_{v f}^\top D_f W_{f j} + W_{vw}^\top D_v W_{w j} = 0, \\
    &L_{vw} + W_{wj}^\top D_v W_{jv} \neq 0.
\end{split}
\end{equation}
The first condition cancels the contribution to edge $\{j,v\}$ introduced by the previous perturbation $L^{jfv}_{\text{pad}}(D_f)$, while the second preserves edge $\{v,w\}$. The perturbation $L^{jvw}_{\text{pad}}(D_v)$ also introduces a nonzero contribution to edge $\{j,w\}$, which will be canceled in a later step. If such $D_v \succ 0$ exists, agent $v$ sends a probe to $w$ containing $[\mathbb{D}, D_v]$ and the information in Algorithm~\ref{alg:edge_addition}.
\end{itemize}

When the probe reaches $j$, unlike in Algorithm~\ref{alg:edge_addition}, the perturbation $\Delta$ is constructed directly from these $D_i$ using \eqref{eq:cycleL}. Here $E_{\text{add}} = \emptyset$, and $|E_{\text{add}}|=0$ is minimal by Theorem~\ref{thm:main-edge-removal}.

As a direct extension of Theorem~\ref{thm:main-edge-removal}, Theorem~\ref{thm:edge_addition}, and Theorem~\ref{thm:agent_joining}, the constructions for Cases (i)–(iii) provide explicit updates $\Delta$ such that, under the stated conditions, $L^+$ has a formation spectrum and $(G^+,\tilde{p}^+,R)$ satisfies Assumption~\ref{ass:graph}. In each case, the added edge set $E_{\text{add}}$ (explicitly given above) is of minimal size as characterized by Theorem~\ref{thm:main-edge-removal}.

\begin{remark}
From a graph-theoretic perspective, Theorem~\ref{thm:main-edge-removal} addresses a specific augmentation problem: given a $2$-vertex-connected graph $G$ and an edge $e$ to be removed, how to find a smallest set of \emph{new edges} (distinct from $e$) that restores $2$-vertex-connectivity to $G-e$. This contrasts with classical graph augmentation problems \cite{Eswaran1976,Rosenthal1977}, which typically consider adding edges to an arbitrary graph to achieve biconnectivity without the constraint of excluding a specific edge. Our contribution extends beyond this combinatorial characterization: we translate each case into a constructive Laplacian update protocol that maintains the desired spectral properties required for formation control.
\end{remark}

\begin{figure}[!t]
\centering
\scriptsize
\begin{subfigure}[b]{0.23\textwidth}
\centering
\begin{tikzpicture}[scale=0.45, every node/.style={scale=0.7}]
    \coordinate (v1) at (-3,3);
    \coordinate (v2) at (3,2);
    \coordinate (v3) at (2,0);
    
    \foreach \point/\name/\pos in {v1/1/above, v2/2/right, v3/3/right} {
        \filldraw[black] (\point) circle (4pt) node[\pos, yshift=0pt, xshift=4pt] {\name};
    }
    
    \draw[thick] (v1) -- node[midway, above] {$[\begin{smallmatrix}
        5 & 0 \\
        0 & -6
    \end{smallmatrix}]$} (v2);
    \draw[thick] (v2) -- node[midway, right] {$[\begin{smallmatrix}
        -30 & 0 \\
        0 & -3
    \end{smallmatrix}]$} (v3);
    \draw[thick] (v3) -- node[midway, below left] {$[\begin{smallmatrix}
        -6 & 0 \\
        0 & 2
    \end{smallmatrix}]$} (v1);
\end{tikzpicture}
\caption{\scriptsize Initial $\triangle_{1,2,3}$}
\label{fig:cycle_a}
\end{subfigure}
\hfill
\begin{subfigure}[b]{0.23\textwidth}
\centering
\begin{tikzpicture}[scale=0.45, every node/.style={scale=0.7}]
    \coordinate (v1) at (-3,3);
    \coordinate (v2) at (3,2);
    \coordinate (v3) at (2,0);
    \coordinate (v4) at (1,-1);
    
    \foreach \point/\name/\pos in {v1/1/above, v2/2/below, v3/3/below, v4/4/below} {
        \filldraw (\point) circle (4pt) node[\pos, yshift=0pt, xshift=4pt] {\name};
    }
    
    \draw[thick] (v1) -- node[midway, above] {$[\begin{smallmatrix}
        5 & 0 \\
        0 & -6
    \end{smallmatrix}]$} (v2);
    \draw[thick] (v2) -- node[midway, right] {$[\begin{smallmatrix}
        -30 & 0 \\
        0 & -3
    \end{smallmatrix}]$} (v3);
    \draw[thick, red, dashed] (v3) -- node[midway, below left] {} (v1);
    \draw[thick, red] (v3) -- node[midway, below, xshift=0.5cm] {$[\begin{smallmatrix}
        -30 & 0 \\
        0 & -6
    \end{smallmatrix}]$} (v4);
    \draw[thick, red] (v1) -- node[midway, left, xshift=-0.2cm] {$[\begin{smallmatrix}
        -7.5 & 0 \\
        0 & 1.5
    \end{smallmatrix}]$} (v4);
\end{tikzpicture}
\caption{\scriptsize Add vertex 4, remove $\{ 1,3 \}$}
\label{fig:cycle_b}
\end{subfigure}
\hfill
\begin{subfigure}[b]{0.23\textwidth}
\centering
\begin{tikzpicture}[scale=0.45, every node/.style={scale=0.7}]
    \coordinate (v1) at (-3,3);
    \coordinate (v2) at (3,2);
    \coordinate (v3) at (2,0);
    \coordinate (v4) at (1,-1);
    \coordinate (v5) at (0,-2);
    
    \foreach \point/\name/\pos in {v1/1/above, v2/2/below, v3/3/below, v4/4/below, v5/5/below} {
        \filldraw (\point) circle (4pt) node[\pos, yshift=0pt, xshift=4pt] {\name};
    }
    
    \draw[thick] (v1) -- node[midway, above] {$[\begin{smallmatrix}
        5 & 0 \\
        0 & -6
    \end{smallmatrix}]$} (v2);
    \draw[thick] (v2) -- node[midway, right, xshift=-0.2cm, yshift=-0.4cm] {$[\begin{smallmatrix}
        -30 & 0 \\
        0 & -3
    \end{smallmatrix}]$} (v3);
    \draw[thick] (v3) -- node[midway, right, xshift=-0.2cm, yshift=-0.4cm] {$[\begin{smallmatrix}
        -30 & 0 \\
        0 & -6
    \end{smallmatrix}]$} (v4);
    \draw[thick, red, dashed] (v1) -- node[midway, left] {} (v4);
    \draw[thick, red] (v4) -- node[midway, right, xshift=-0.2cm, yshift=-0.4cm] {$[\begin{smallmatrix}
        -30 & 0 \\
        0 & -6
    \end{smallmatrix}]$} (v5);
    \draw[thick, red] (v1) -- node[midway, left, xshift=0.1cm, yshift=-0.4cm] {$[\begin{smallmatrix}
        -10 & 0 \\
        0 & 1.2
    \end{smallmatrix}]$} (v5);
\end{tikzpicture}
\caption{\scriptsize Add vertex 5, remove $\{ 1,4 \}$}
\label{fig:cycle_c}
\end{subfigure}
\hfill
\begin{subfigure}[b]{0.23\textwidth}
\centering
\begin{tikzpicture}[scale=0.45, every node/.style={scale=0.7}]
    \coordinate (v1) at (-3,3);
    \coordinate (v2) at (3,2);
    \coordinate (v3) at (2,0);
    \coordinate (v4) at (1,-1);
    \coordinate (v5) at (0,-2);
    \coordinate (v6) at (-2,-3);
    
    \foreach \point/\name/\pos in {v1/1/above, v2/2/below, v3/3/below, v4/4/below, v5/5/below, v6/6/below} {
        \filldraw (\point) circle (4pt) node[\pos, yshift=0pt, xshift=4pt] {\name};
    }
    
    \draw[thick] (v1) -- node[midway, above] {$[\begin{smallmatrix}
        5 & 0 \\
        0 & -6
    \end{smallmatrix}]$} (v2);
    \draw[thick] (v2) -- node[midway, right, xshift=-0.2cm, yshift=-0.4cm] {$[\begin{smallmatrix}
        -30 & 0 \\
        0 & -3
    \end{smallmatrix}]$} (v3);
    \draw[thick] (v3) -- node[midway, right, xshift=-0.2cm, yshift=-0.4cm] {$[\begin{smallmatrix}
        -30 & 0 \\
        0 & -6
    \end{smallmatrix}]$} (v4);
    \draw[thick] (v4) -- node[midway, right, xshift=-0.2cm, yshift=-0.4cm] {$[\begin{smallmatrix}
        -30 & 0 \\
        0 & -6
    \end{smallmatrix}]$} (v5);
    \draw[thick, red, dashed] (v1) -- node[midway, left] {} (v5);
    \draw[thick, red] (v5) -- node[midway, right, xshift=-0.3cm, yshift=-0.4cm] {$[\begin{smallmatrix}
        -15 & 0 \\
        0 & -6
    \end{smallmatrix}]$} (v6);
    \draw[thick, red] (v1) -- node[midway, left] {$[\begin{smallmatrix}
        -30 & 0 \\
        0 & 1
    \end{smallmatrix}]$} (v6);
\end{tikzpicture}
\caption{\scriptsize Add vertex 6, remove $\{ 1,5 \}$}
\label{fig:cycle_d}
\end{subfigure}
\hfill
\begin{subfigure}[b]{0.23\textwidth}
\centering
\begin{tikzpicture}[scale=0.45, every node/.style={scale=0.7}]
    \coordinate (v1) at (-3,3);
    \coordinate (v2) at (3,2);
    \coordinate (v3) at (2,0);
    \coordinate (v4) at (1,-1);
    \coordinate (v5) at (0,-2);
    \coordinate (v6) at (-2,-3);
    \coordinate (v7) at (-1,-2.1);
    
    \foreach \point/\name/\pos in {v1/1/above, v2/2/below, v3/3/below, v4/4/below, v5/5/below, v6/6/below, v7/7/above} {
        \filldraw (\point) circle (4pt) node[\pos, yshift=0pt, xshift=4pt] {\name};
    }
    
    \draw[thick] (v1) -- node[midway, above] {$[\begin{smallmatrix}
        5 & 0 \\
        0 & -6
    \end{smallmatrix}]$} (v2);
    \draw[thick] (v2) -- node[midway, right, xshift=-0.2cm, yshift=-0.4cm] {$[\begin{smallmatrix}
        -30 & 0 \\
        0 & -3
    \end{smallmatrix}]$} (v3);
    \draw[thick] (v3) -- node[midway, right, xshift=-0.2cm, yshift=-0.4cm] {$[\begin{smallmatrix}
        -30 & 0 \\
        0 & -6
    \end{smallmatrix}]$} (v4);
    \draw[thick] (v4) -- node[midway, right, xshift=-0.2cm, yshift=-0.4cm] {$[\begin{smallmatrix}
        -30 & 0 \\
        0 & -6
    \end{smallmatrix}]$} (v5);
    \draw[thick] (v5) -- node[midway, right, xshift=-0.3cm, yshift=-0.4cm] {$[\begin{smallmatrix}
        -14 & 0 \\
        0 & -5.91
    \end{smallmatrix}]$} (v6);
    \draw[thick] (v1) -- node[midway, left] {$[\begin{smallmatrix}
        -30 & 0 \\
        0 & 1
    \end{smallmatrix}]$} (v6);
    \draw[thick, red] (v6) -- node[midway, left, xshift=0.3cm, yshift=0.4cm] {$[\begin{smallmatrix}
        -2 & 0 \\
        0 & -0.1
    \end{smallmatrix}]$} (v7);
    \draw[thick, red] (v5) -- node[midway, above, xshift=0cm, yshift=0.4cm] {$[\begin{smallmatrix}
        -2 & 0 \\
        0 & -0.9
    \end{smallmatrix}]$} (v7);
\end{tikzpicture}
\caption{\scriptsize Add vertex 7}
\label{fig:cycle_e}
\end{subfigure}
\hfill
\begin{subfigure}[b]{0.23\textwidth}
\centering
\begin{tikzpicture}[scale=0.45, every node/.style={scale=0.7}]
    \coordinate (v1) at (-3,3);
    \coordinate (v2) at (3,2);
    \coordinate (v3) at (2,0);
    \coordinate (v4) at (1,-1);
    \coordinate (v5) at (0,-2);
    \coordinate (v6) at (-2,-3);
    \coordinate (v7) at (-1,-2.1);
    
    \foreach \point/\name/\pos in {v1/1/above, v2/2/below, v3/3/below, v4/4/below, v5/5/below, v6/6/below, v7/7/above} {
        \filldraw (\point) circle (4pt) node[\pos, yshift=0pt, xshift=4pt] {\name};
    }
    
    \draw[thick] (v1) -- node[midway, above] {$[\begin{smallmatrix}
        5 & 0 \\
        0 & -6
    \end{smallmatrix}]$} (v2);
    \draw[thick] (v2) -- node[midway, right, xshift=-0.2cm, yshift=-0.4cm] {$[\begin{smallmatrix}
        -30 & 0 \\
        0 & -3
    \end{smallmatrix}]$} (v3);
    \draw[thick] (v3) -- node[midway, right, xshift=-0.2cm, yshift=-0.4cm] {$[\begin{smallmatrix}
        -30 & 0 \\
        0 & -6
    \end{smallmatrix}]$} (v4);
    \draw[thick] (v4) -- node[midway, right, xshift=-0.2cm, yshift=-0.4cm] {$[\begin{smallmatrix}
        -30 & 0 \\
        0 & -6
    \end{smallmatrix}]$} (v5);
    \draw[thick] (v1) -- node[midway, left] {$[\begin{smallmatrix}
        -30 & 0 \\
        0 & 1
    \end{smallmatrix}]$} (v6);
    \draw[thick] (v6) -- node[midway, left, xshift=0.4cm, yshift=0.4cm] {$[\begin{smallmatrix}
        -30 & 0 \\
        0 & -6.67
    \end{smallmatrix}]$} (v7);
    \draw[thick] (v5) -- node[midway, above, xshift=0cm, yshift=0.4cm] {$[\begin{smallmatrix}
        -30 & 0 \\
        0 & -60
    \end{smallmatrix}]$} (v7);
\end{tikzpicture}
\caption{\scriptsize Remove $\{ 5,6 \}$}
\label{fig:cycle_f}
\end{subfigure}

\caption{Edge weight evolution through agent joining and edge removal. Solid black edges: existing edges; dashed red edges: edges to be removed; red edges: newly added edges.}
\label{fig:cycle}
\end{figure}

\section{Simulation}\label{sec:simulations}
This section presents numerical simulations to validate the theoretical results. Four experiments are conducted. The first constructs a matrix-valued Laplacian for a cycle graph, illustrating the iterative update mechanism described in \eqref{eq:L_update}. The other three couple the spectral property maintenance framework with the formation tracking control law from~\cite{He2026}, demonstrating formation maneuvers under dynamic topology changes, e.g., agent joining, agent leaving, and edge removal. (Experiments on edge addition and leader reassignment are omitted for brevity, as they follow the same pattern.)

For completeness, the control law is described as follows. Each leader $i \in V_l$ tracks a predefined trajectory via
\begin{equation}
u_i = -\alpha_1 \tanh(\alpha_2 (p_i - p^*_i)) + \dot{p}^*_i,
\label{equ_lu}
\end{equation}
where $p^*_i=S(R)\tilde{p}_i+\tau$, $\alpha_1,\alpha_2>0$ are control gains, and $\tanh(\cdot)$ denotes the hyperbolic tangent function. Each follower $i \in V_f$ maintains the formation using relative position measurements via
\begin{equation}
u_i = -\beta_1 \sum_{j \in N_i} L_{ij}(p_j-p_i) - \beta_2 \operatorname{sgn}\!\left(\sum_{j \in N_i} L_{ij}(p_j-p_i)\right),
\label{equ_pu}
\end{equation}
where $\beta_1,\beta_2>0$ are control gains, $\text{sgn}(\cdot)$ denotes the signum function, and $L_{ij}$ are the matrix-valued edge weights designed in previous sections. Agents 1 and 2 are designated as leaders, and the others are followers in all simulations.

\subsection{Laplacian Construction for a Cycle Graph} \label{sec:sim1}
This subsection constructs a matrix-valued Laplacian for a cycle graph using the iterative update mechanism \eqref{eq:L_update}. The resulting Laplacian will be used as the initial Laplacian in the subsequent formation maneuver control experiments.

We construct a cycle formation in $\mathbb{R}^2$ with $R = I_2$, starting from the triangle $\triangle_{1,2,3}$ with nominal positions $\tilde{p}_1 = [-3, 3]^\top$, $\tilde{p}_2 = [3, 2]^\top$, $\tilde{p}_3 = [2, 0]^\top$. For the initial triplet $(i,j,v)=(1,2,3)$, we form the constraint matrix $M_{123}$ as in \eqref{eq:iju_matrix} and set $D=I_2$ in \eqref{eq:ijv_block}. The resulting Laplacian $L^{123}=M_{123}^\top D M_{123}$ yields the edge weight matrices (the off-diagonal blocks of $L^{123}$) displayed on each edge in Fig.~\ref{fig:cycle_a}.

To add vertex $4$ with $\tilde{p}_4 = [1,-1]^\top$ while removing edge $\{1,3\}$, we combine the agent joining and edge removal operations. Following \eqref{eq:LVA}, we first pad $L^{123}$ to accommodate vertex $4$: $\mathcal{E}(L^{123}) = [\begin{smallmatrix} L^{123} & 0 \\ 0 & 0 \end{smallmatrix}]$. For the triplet $(i,j,v)=(1,3,4)$, we construct $L^{134}(D)$ as in \eqref{eq:ijv_block} and choose $D = \operatorname{diag}(1.5,\,0.5)$ to satisfy $W_{34}^\top D W_{41} + L^{123}_{13} = 0$, which removes edge $\{1,3\}$.  We then initialize a $4d\times 4d$ zero matrix $L^{134}_{\text{pad}}(D)$ and embed $L^{134}(D)$ into it by setting $(a_1,a_2,a_3)=(1,3,4)$ and $\bigl(L^{134}_{\text{pad}}(D)\bigr)_{a_ba_d} = \bigl(L^{134}(D)\bigr)_{bd}$ for $b,d\in\{1,2,3\}$. The updated Laplacian $L^+ = \mathcal{E}(L_{123}) + L^{134}_{\text{pad}}(D)$ corresponds to the graph shown in Fig.~\ref{fig:cycle_b}, in which edge $\{1,3\}$ is eliminated and new edges $\{1,4\},\{3,4\}$ appear. Vertices 5 and 6 are subsequently added using the same strategy: vertex 5 at $\tilde{p}_5 = [0, -2]^\top$ (connected to 1 and 4, deleting $\{ 1, 4 \}$) with $D = \diag(2.5, 0.3)$, and vertex 6 at $\tilde{p}_6 = [-2, -3]^\top$ (connected to 1 and 5, deleting $\{ 1, 5 \}$) with $D = \diag(5, 0.2)$. 

Next, vertex 7 is added at $\tilde{p}_7 = [-1, -2.1]^\top$ using the triplet $(i,j,v) = (5,6,7)$ with $D = \diag(1, 1)$, following the agent joining operation, as illustrated in Fig.~\ref{fig:cycle_e}. Finally, edge $\{5,6\}$ is deleted by applying the edge removal operation with the same triplet $(5,6,7)$: we require $L_{56} + W_{67}^\top D W_{75} = 0$, giving $D = \operatorname{diag}(14, 65.6667)$. The final cycle is shown in Fig.~\ref{fig:cycle_f}.

\begin{figure}[!tbp]
    \centering

    \begin{subfigure}[b]{0.45\textwidth} 
        \centering
        \includegraphics[width=\linewidth]{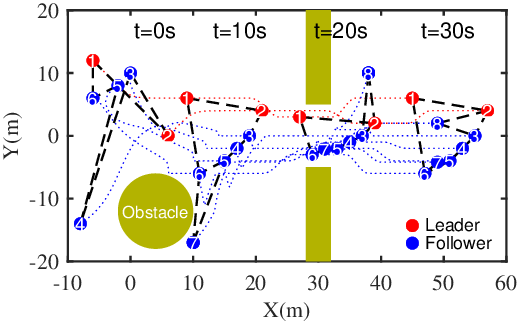}
        \caption{Agent trajectories.}
        \label{fig_te_va}
    \end{subfigure}
    \hfill
    \begin{subfigure}[b]{0.45\textwidth}
        \centering
        \includegraphics[width=\linewidth]{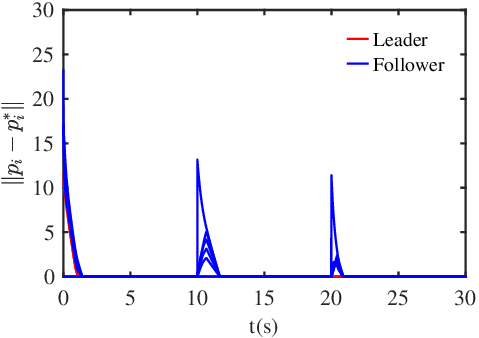}
        \caption{Tracking errors.}
        \label{fig_t_va}
    \end{subfigure}
    \caption{Formation maneuver control under agent joining (agent $7$ joins at $t=10$s, agent $8$ joins at $t=20$s).} 
    \label{fig_cycle_va}
\end{figure}

\subsection{Formation Maneuver Control with Agent Joining}
This simulation validates the proposed agent joining protocol (Algorithm~\ref{alg:agent_joining}) in a dynamic formation maneuver scenario. The initial six-agent formation is a cycle graph whose vertex set, edge set, and edge weight matrices are shown in Fig.~\ref{fig:cycle_d}. The Laplacian $L$ of this formation satisfies the formation spectrum (Definition~\ref{def:formation_spectrum}), as constructed in Section~\ref{sec:sim1}. The simulation timeline includes formation establishment (0--10 s), the integration of agent 7 at $10$ s, a non-uniform scaling maneuver (15--25 s), and the integration of agent 8 at $20$ s.

As shown in Fig.~\ref{fig_cycle_va}(a), the initial six-agent formation achieves and maintains the desired shape during the establishment phase, using the nominal formation and Laplacian from Subsection~\ref{sec:sim1}. At $t = 10$ s, agent 7 joins at $\tilde{p}_7 = [-1, -2.1]^\top$ by executing Algorithm~\ref{alg:agent_joining} with the triplet $(5,6,7)$ and $D=I_2$. This triggers a one-time distributed update of the Laplacian to $L^+$, which again satisfies the formation spectrum. From $t=15$ s to $t=25$ s, the formation performs a non-uniform scaling maneuver, compressing by 50\% along the y-axis to navigate a narrow passage. This specific maneuver is necessary since uniform scaling or rigid-body motion would result in inter-agent collisions or an inability to pass through the constriction. At $t = 20$ s, during the active scaling, agent 8 joins at $\tilde{p}_8 = [-1, 1]^\top$ by executing Algorithm~\ref{alg:agent_joining} with the triplet $(2,3,8)$ and $D = I_2$, testing the framework's ability to handle simultaneous shape deformation and network expansion.

Fig. \ref{fig_cycle_va} (b) shows that both leader tracking error and follower tracking error converge to zero, with only a minor transient disturbance during the agent joining event at $10$ s and $20$ s.

\begin{figure}[tbp]
    \centering

    \begin{subfigure}[b]{0.4\textwidth} 
        \centering
        \includegraphics[width=\linewidth]{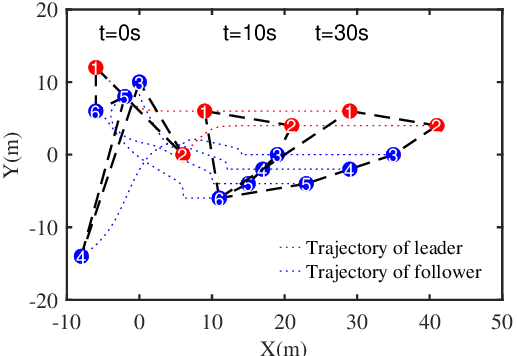}
        \caption{Agent trajectories.}
        \label{fig_te_vf}
    \end{subfigure}
    \hfill
    \begin{subfigure}[b]{0.4\textwidth}
        \centering
        \includegraphics[width=\linewidth]{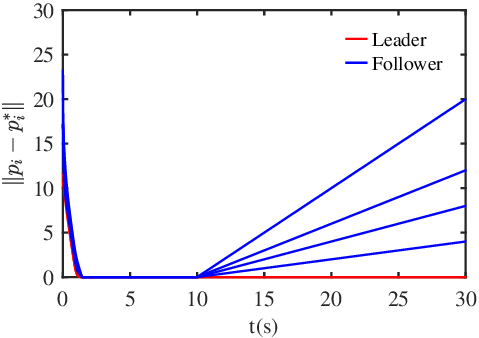}
        \caption{Tracking errors.}
        \label{fig_t_vf}
    \end{subfigure}
    \caption{Formation maneuver control without agent leaving protocol (agent $6$ halts at $t=10$ s).}
    \label{fig_cycle_vf}
\end{figure}

\begin{figure}[tbp]
    \centering

    \begin{subfigure}[b]{0.4\textwidth} 
        \centering
        \includegraphics[width=\linewidth]{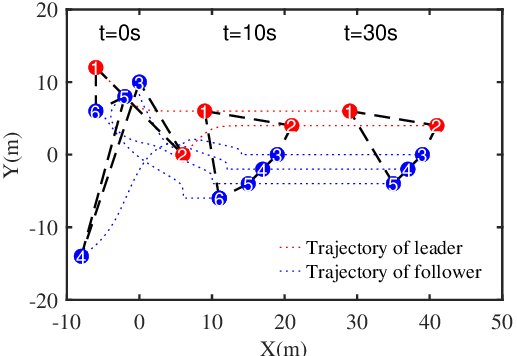}
        \caption{Agent trajectories.}
        \label{fig_te_vd}
    \end{subfigure}
    \hfill
    \begin{subfigure}[b]{0.4\textwidth}
        \centering
        \includegraphics[width=\linewidth]{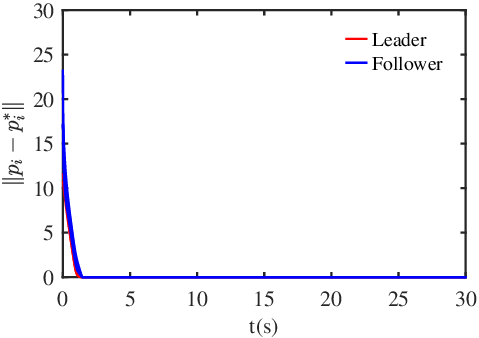}
        \caption{Tracking errors.}
        \label{fig_t_vd}
    \end{subfigure}
    \caption{Formation maneuver control with proposed agent leaving protocol (agent $6$ removed).}
    \label{fig_cycle_vd}
\end{figure}

\begin{figure}[tbp]
    \centering

    \begin{subfigure}[b]{0.4\textwidth} 
        \centering
        \includegraphics[width=\linewidth]{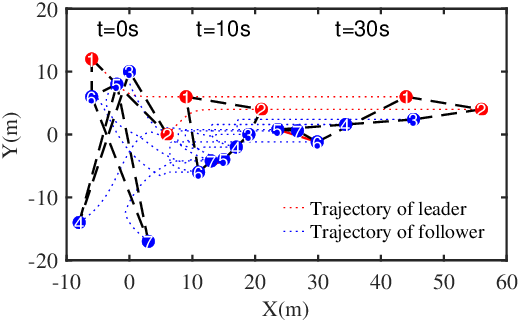}
        \caption{Agent trajectories.}
        \label{fig_te_ef}
    \end{subfigure}
    \hfill
    \begin{subfigure}[b]{0.4\textwidth}
        \centering
        \includegraphics[width=\linewidth]{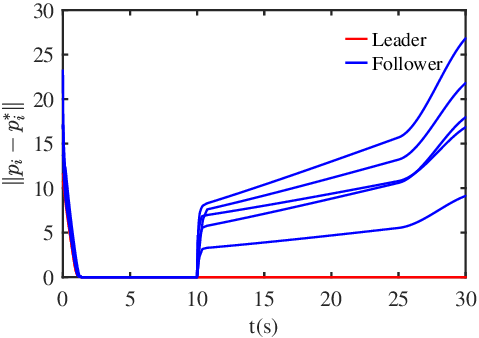}
        \caption{Tracking errors.}
        \label{fig_t_ef}
    \end{subfigure}
    \caption{Formation maneuver control without edge removal protocol (edge $\{5,6\}$ disconnected at $t=10$ s).}
    \label{fig_cycle_ef}
\end{figure}

\begin{figure}[tbp]
    \centering

    \begin{subfigure}[b]{0.4\textwidth} 
        \centering
        \includegraphics[width=\linewidth]{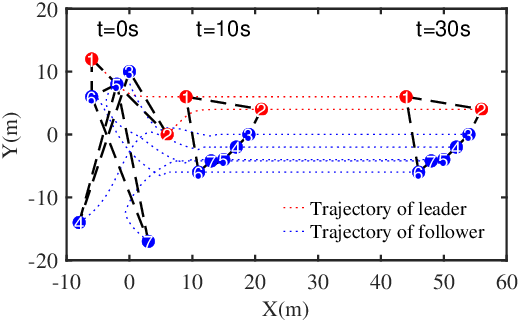}
        \caption{Agent trajectories.}
        \label{fig_te_ed}
    \end{subfigure}
    \hfill
    \begin{subfigure}[b]{0.4\textwidth}
        \centering
        \includegraphics[width=\linewidth]{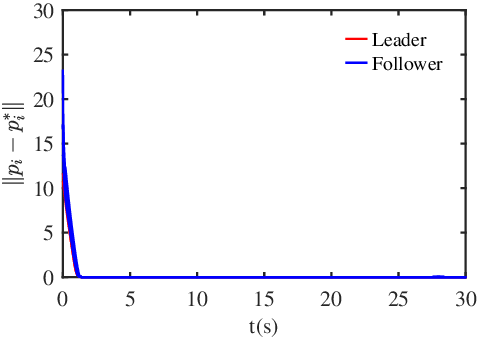}
        \caption{Tracking errors.}
        \label{fig_t_ed}
    \end{subfigure}
    \caption{Formation maneuver control with proposed edge removal protocol (edge $\{5,6\}$ removed at $t=10$ s).}
    \label{fig_cycle_ed}
\end{figure}

\subsection{Formation Maneuver Control with Agent Leaving}
This simulation validates the agent leaving protocol proposed in Section~\ref{sol:agent_leaving}. The initial formation is a graph whose vertex set, edge set, and edge weight matrices are shown in Fig.~\ref{fig:cycle_d}. The Laplacian $L$ of this formation satisfies the formation spectrum (Definition~\ref{def:formation_spectrum}), as constructed in Section~\ref{sec:sim1}. 

The simulation consists of two phases: formation establishment (0–10 s) and agent leaving (10–30 s). At $t = 10$ s, agent $6$ stops moving due to failure. (i.e., $p_6$ no longer changes), but the formation control law continues to run. Figures~\ref{fig_cycle_vf} and~\ref{fig_cycle_vd} compare the system behavior with and without the proposed agent leaving strategy. In Fig.~\ref{fig_cycle_vf}, the Laplacian is left unchanged; we simply set the velocity of agent $6$ to zero to simulate a failure, causing significant deformation and diverging tracking errors. In Fig.~\ref{fig_cycle_vd}, we apply the agent leaving protocol: $\bar\Delta = 0$ and $\Delta = -L_u^\top L_{uu}^{-1} L_u$ with $u = 6$, yielding $L^+ = \mathcal{E}(L + \Delta)$. The updated Laplacian corresponds to the five-agent graph in Fig.~\ref{fig:cycle_c}. Using $L^+$ in the control law, the remaining agents rapidly reconfigure into a consistent formation while keeping all tracking errors bounded.

\subsection{Formation Maneuver Control with Edge Removal}
This simulation validates the edge removal protocol for Case (i) of Subsection~\ref{subsec:edge_removal}, where $H=G-e$ remains $2$-vertex-connected and no compensatory edges are needed ($E_{\text{add}} = \emptyset$). The initial formation is a graph whose vertex set, edge set, and edge weight matrices are shown in Fig.~\ref{fig:cycle_e}. The Laplacian $L$ of this formation satisfies the formation spectrum (Definition~\ref{def:formation_spectrum}), as constructed in Section~\ref{sec:sim1}.

The simulation consists of two phases: formation establishment (0–10 s) and edge removal (10–30 s). At $t=10$ s, the relative displacement measurement of edge $\{5,6\}$ is lost. Figures~\ref{fig_cycle_ef} and~\ref{fig_cycle_ed} compare the system behavior with and without the proposed edge removal strategy. In Fig.~\ref{fig_cycle_ef}, the Laplacian is left unchanged; the missing measurement is set to zero, leading to visible formation distortion and unbounded tracking errors. In Fig.~\ref{fig_cycle_ed}, the proposed edge removal protocol is executed: using triplet $(5,6,7)$, we compute $D = \operatorname{diag}(14,\;65.6667)$ to satisfy $W_{67}^\top D W_{75} = -L_{56}$, construct $\Delta = L^{567}_{\text{pad}}(D)$ (as in \eqref{eq_pad}), and update the Laplacian to $L^+ = L + \Delta$, corresponding to the cycle graph in Fig.~\ref{fig:cycle_f}. The updated Laplacian maintains the formation spectrum; the formation rapidly converges again and tracking errors remain bounded. 

\section{Conclusion}\label{sec:conclusion}

We have developed a distributed framework for non-uniform scaling formation maneuver control of open multi-agent systems. It handles four elementary topology operations—agent joining, agent leaving, edge addition, and edge removal—solving the spectral shaping problem of the matrix-valued Laplacian under topology changes. Rigorous theoretical analysis and simulation validation were provided.

Several directions merit future investigation. Extending the proposed framework to directed graphs would broaden its applicability. Adaptive leader–follower reassignment and maneuver parameter generation based on environmental feedback present promising avenues. Graph augmentation under more complex constraints also warrants further study.

\appendix
\section{Proof of Theorem \ref{thm:agent_joining}} \label{sec:proofTheagent_joining}
The proof requires a lemma.
\begin{lemma}[Schur complement \cite{horn2013matrix,zhang2005}]
\label{lem:inverse_schur}
For any symmetric block matrix of the form $M=[\begin{smallmatrix}A & B\\ B^\top & C \end{smallmatrix}]$ where $C$ is invertible, let $S=A-BC^{-1}B^\top$ denote the Schur complement of $C$ in $M$. Then the following statements hold:
\begin{enumerate}
    \item[(1)] $M$ is invertible if and only if $S$ is invertible. In that case,
    \[
    M^{-1}=
    \begin{bmatrix}
    S^{-1} & -S^{-1}BC^{-1}\\
    -C^{-1}B^\top S^{-1} &
    C^{-1}+C^{-1}B^\top S^{-1}BC^{-1}
    \end{bmatrix}.
    \]

    \item[(2)] $M\succ0$ if and only if $C\succ0$ and $S\succ0$.

    \item[(3)] If $C \succ 0$, then $M \succeq 0$ if and only if $S \succeq 0$.
\end{enumerate}
\end{lemma}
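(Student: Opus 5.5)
The approach is to reduce all three statements to a single block factorization of $M$ and then read each claim off the factors. Since $C$ is invertible, I will first verify by direct multiplication that
\begin{equation*}
M=\begin{bmatrix} I & BC^{-1}\\ 0 & I\end{bmatrix}
\begin{bmatrix} S & 0\\ 0 & C\end{bmatrix}
\begin{bmatrix} I & 0\\ C^{-1}B^\top & I\end{bmatrix},
\end{equation*}
using $C^\top=C$, which follows from the symmetry of $M$. Write $T=\bigl[\begin{smallmatrix} I & 0\\ C^{-1}B^\top & I\end{smallmatrix}\bigr]$. Then the left factor equals $T^\top$, so $M=T^\top \operatorname{diag}(S,C)\,T$, and $T$ is unit block-triangular and therefore invertible.

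For (1), $M$ is invertible if and only if $\operatorname{diag}(S,C)$ is, because the outer factors are invertible. As $C$ is invertible, this happens exactly when $S$ is invertible. The inverse formula then follows from
\begin{equation*}
M^{-1}=T^{-1}\operatorname{diag}(S^{-1},C^{-1})\,T^{-\top},
\end{equation*}
where $T^{-1}=\bigl[\begin{smallmatrix} I & 0\\ -C^{-1}B^\top & I\end{smallmatrix}\bigr]$. Multiplying out the three factors gives the four stated blocks; this is a routine check.

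For (2) and (3), the factorization is a congruence $M=T^\top \operatorname{diag}(S,C)\,T$ with $T$ nonsingular. Congruence preserves positive (semi)definiteness, since $x^\top M x = y^\top \operatorname{diag}(S,C)\,y$ with $y=Tx$ and $y$ ranges over all vectors as $x$ does. Hence $M\succ0$ if and only if $\operatorname{diag}(S,C)\succ0$, that is, $S\succ0$ and $C\succ0$, which proves (2). If $C\succ0$ is given, then $\operatorname{diag}(S,C)\succeq0$ if and only if $S\succeq0$, which proves (3).

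The only point needing care is the direction ``$M\succ0\Rightarrow C\succ0$'' in (2). It is covered by the congruence argument, and it can also be seen directly because $C$ is a principal submatrix of $M$. I expect no real obstacle; the main task is bookkeeping the transposes so that the left factor is exactly $T^\top$, which is where the symmetry of $M$ is used.
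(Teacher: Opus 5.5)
Your proposal is correct: the congruence $M = T^\top \diag(S,C)\,T$ with unit block-triangular $T$ gives all three claims, and both the inverse formula and the transpose bookkeeping (which uses $C^\top = C$) check out. The paper gives no proof of its own and only cites Horn--Johnson and Zhang, whose standard argument is this same block $LDL^\top$ factorization, so your route coincides with the intended one.
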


We now prove Theorem~\ref{thm:agent_joining}.
\begin{pf}
We first prove that $L^+$ has a formation spectrum by verifying the three conditions in Definition \ref{def:formation_spectrum}.

\textbf{(1) Proof of $L^+ \succeq 0$ (condition (i) of Definition \ref{def:formation_spectrum}).}
Since $L \succeq 0$ and $L^{ijv} \succeq 0$, we have $L^+ \succeq 0$. 

\textbf{(2) Proof of $Q^\top L^+ Q \succ 0$ for every $Q \in \mathcal{Q}$ (condition (iii) of Definition \ref{def:formation_spectrum}).} For the specific leader set ${r_1,r_2}$, this reduces to proving $L^+_{ff} \succ 0$. To prove this, we consider two cases: $v$ as a follower and $v$ as a leader. The proof for other leader sets is analogous.

\textbf{Case A: $v$ is added as a follower.} For the detailed proof, see \cite{He2026} (the preliminary version of this work).

\textbf{Case B: $v$ is added as a leader.}  
To maintain exactly two leaders, we replace one of the original leaders with $v$. Without loss of generality, we retain $r_1$ as a leader and demote $r_2$ to a follower. The new leader set is $\{r_1,v\}$, and the new follower set is $V_f \cup \{r_2\}$. We now consider all possible roles of the selected neighbours $i,j$ and show that the new follower block $L^+_{ff}$ is positive definite.

Partition the new follower set into $P_1 = V_f \setminus \{i,j\}$ and $P_2 = (V_f \cup \{r_2\}) \setminus P_1$. Write
\begin{equation}
L^+_{ff} = \begin{bmatrix}
A & B \\
B^\top & C + \Delta_C
\end{bmatrix},
\end{equation}
where $A = L_{P_1P_1} \succ 0$, $C= L_{P_2P_2}$, and $\Delta_C = \bar{\Delta}_{P_2P_2} \succeq 0$. By the Schur complement lemma, $L^+_{ff} \succ 0$ if and only if $S + \Delta_C \succ 0$ with $S = C - B^\top A^{-1} B \succeq 0$.

Suppose there exists a nonzero $y \in \ker(S) \cap \ker(\Delta_C)$. Then from $y \in \ker(S)$, there exists $x$ such that $[\begin{smallmatrix} A & B \\ B^\top & C \end{smallmatrix}] [\begin{smallmatrix} x \\ y \end{smallmatrix}] = 0$. Define $w$ on the original vertex set $V$ by setting $w|_{P_1}=x$, $w|_{P_2}=y$, and $w_{r_1}=0$. Because \(w^\top L w = [\begin{smallmatrix} x \, y \end{smallmatrix}] [\begin{smallmatrix} A & B \\ B^\top & C \end{smallmatrix}] [\begin{smallmatrix} x \\ y \end{smallmatrix}] = 0\) and \(L \succeq 0\), we have \(L w = 0\). Hence $w$ lies in $\varPi(\tilde{p},R)$, i.e., there exist $S(R)$ and $\tau$ such that $w_k = \tau + S(R)\tilde{p}_k$ for all $k$. With $w_{r_1}=0$, we get $\tau = -S(R)\tilde{p}_{r_1}$, and thus for any $k \in P_2$, 
\begin{equation}\label{eq_leaderv}
    w_k = S(R)(\tilde{p}_k - \tilde{p}_{r_1}).
\end{equation}

Now we examine each subcase, using the condition \(y \in \ker(\Delta_C)\) to derive a contradiction.

\textbf{Subcase B1: $i=r_1$, $j=r_2$.}  
Here $P_2 = \{r_2\}$ and $\Delta_C = W_{v r_1}^\top D W_{v r_1} \succ 0$, so $\ker(\Delta_C)=\{0\}$. Hence no nonzero $y$ exists.

\textbf{Subcase B2: $i=r_1$, $j\in V_f$.}  
Here $P_2 = \{j, r_2\}$ and $\Delta_C = \operatorname{diag}(W_{v r_1}^\top D W_{v r_1}, 0)$ in the ordering $(j, r_2)$. If $y \in \ker(\Delta_C)$, then $y_j=0$. From \eqref{eq_leaderv}, $y_j = S(R)(\tilde{p}_j - \tilde{p}_{r_1})$, so $\tilde{p}_j = \tilde{p}_{r_1}$. This contradicts the fact that $j$ and $r_1$ are distinct agents, which is guaranteed by Assumption~\ref{ass:graph}.

\textbf{Subcase B3: $i\in V_f$, $j=r_2$.}  
Here $P_2 = \{i, r_2\}$ and 
\begin{equation}
\Delta_C = \begin{bmatrix} W_{r_2 v}^\top D W_{r_2 v} & W_{r_2 v}^\top D W_{v i} \\ W_{v i}^\top D W_{r_2 v} & W_{v i}^\top D W_{v i} \end{bmatrix}.
\end{equation}
Its nullspace is $\ker(\Delta_C) = \{y=[y_i, y_{r_2}]^{\top} : W_{r_2 v} y_i + W_{v i} y_{r_2} = 0\}$. If \(y \in \ker(\Delta_C)\), then using \eqref{eq_leaderv} for \(i\) and \(r_2\) and substituting into the nullspace condition yields after simplification
\begin{equation}
(\tilde{p}_{i,R}^\alpha - \tilde{p}_{r_2,R}^\alpha)(\tilde{p}_{r_1,R}^\alpha - \tilde{p}_{v,R}^\alpha) = 0 \quad \forall \alpha \in \{1,\dots,d\}.
\end{equation}
By Assumption~\ref{ass:graph}, $\tilde{p}_{i r_2,R}$ has all components nonzero, so $\tilde{p}_{i,R}^\alpha \neq \tilde{p}_{r_2,R}^\alpha$. Hence $\tilde{p}_{r_1,R}^\alpha = \tilde{p}_{v,R}^\alpha$ for all $\alpha$, i.e., $\tilde{p}_{r_1} = \tilde{p}_v$, contradicting that $v$ is a new agent with distinct nominal position (can be ensured by choice). Thus no such $y$ exists.

\textbf{Subcase B4: $i,j\in V_f$.}  
Here $P_2 = \{i, j, r_2\}$ and $\Delta_C$ affects the $(i,j,r_2)$ block. Its nullspace is $\ker(\Delta_C) = \{y=[y_i,y_j,y_{r_2}]^{\top}: W_{jv} y_i + W_{vi} y_j = 0\}$. If \(y \in \ker(\Delta_C)\), then using \eqref{eq_leaderv} for $i,j,r_2$ and substituting into the nullspace condition yields after simplification
\begin{equation}
(\tilde{p}_{i,R}^\alpha - \tilde{p}_{j,R}^\alpha)(\tilde{p}_{r_1,R}^\alpha - \tilde{p}_{v,R}^\alpha) = 0 \quad \forall \alpha \in \{1,\dots,d\}.
\end{equation}
By Assumption~\ref{ass:graph}, $\tilde{p}_{ji,R}$ has all components nonzero, so $\tilde{p}_{i,R}^\alpha \neq \tilde{p}_{j,R}^\alpha$. Hence $\tilde{p}_{r_1,R}^\alpha = \tilde{p}_{v,R}^\alpha$ for all $\alpha$, i.e., $\tilde{p}_{r_1} = \tilde{p}_v$, again a contradiction.

In all subcases, no nonzero $y \in \ker(S) \cap \ker(\Delta_C)$ exists. Therefore $S + \Delta_C \succ 0$, and consequently $L^+_{ff} \succ 0$. 

\textbf{(3) Proof of $\ker(L^+) = \varPi(\tilde{p}^+,R)$ (condition (ii) of Definition \ref{def:formation_spectrum}).}
First, by construction and Lemma \ref{lem:2}, any $p^+ \in \varPi(\tilde{p}^+,R)$ satisfies:
\begin{equation}
\mathcal{E}(L) p^+ = 0 \quad \text{and} \quad L^{ijv}_{\text{pad}} p^+ = 0,
\end{equation}
implying that $L^+ p^+ = 0$. This shows $\varPi(\tilde{p}^+,R) \subseteq \ker(L^+)$. For the reverse inclusion, note that $L^+_{ff}\succ 0$ (proved above) and $L^+_{ff}\in\mathbb{R}^{d(n-1)\times d(n-1)}$ imply $\operatorname{rank}(L^+)\ge d(n-1)$. Since $\tilde{p}^+$ contains $\tilde{p}$ as a subvector, the projection $\pi:\mathbb{R}^{d(n+1)}\to\mathbb{R}^{dn}$ onto the first $n$ agents satisfies $\pi(\varPi(\tilde{p}^+,R))=\varPi(\tilde{p},R)$. As $\dim(\varPi(\tilde{p},R))=2d$ and $\pi$ is surjective onto it, we have $\dim(\varPi(\tilde{p}^+,R))\ge 2d$. Every element of $\varPi(\tilde{p}^+,R)$ is parameterized by $(s,\tau)\in\mathbb{R}^{2d}$, so $\dim(\varPi(\tilde{p}^+,R))\le 2d$. Hence $\dim(\varPi(\tilde{p}^+,R))=2d$. Consequently, $\operatorname{rank}(L^+)\le d(n+1)-2d = d(n-1)$, which together with the lower bound gives $\operatorname{rank}(L^+)=d(n-1)$ and $\dim(\ker(L^+))=2d$. The inclusion $\varPi(\tilde{p}^+,R) \subseteq \ker(L^+)$ together with $\dim(\varPi(\tilde{p}^+,R)) = \dim(\ker(L^+)) = 2d$ implies $\ker(L^+) = \varPi(\tilde{p}^+,R)$.

\textbf{(4) Proof of minimality of $|E_{\text{add}}|$ and $|E_{\text{mod}}|$.}
We choose $D\succ0$ such that $L_{ij} + W_{jv}^\top D W_{vi} \neq 0$. By Assumption~\ref{ass:graph}, $W_{jv}$ and $W_{vi}$ are invertible. Consequently, the construction of $L^+$ in \eqref{eq:LVA} adds precisely two new edges $\{i,v\}$ and $\{j,v\}$ (whose weights are non-zero), while preserving the existing edge $\{i,j\}$ and leaving all other edges unchanged. Hence $E^+ = E \cup E_{\text{add}}$, and $E_{\text{mod}} = \{\{i,j\}\}$. We now verify that $|E_{\text{add}}|$ is minimal while ensuring $G^+$ remains 2-vertex-connected. Note that $L^+$ is symmetric, so $G^+$ is undirected. Adding only one edge $\{i,v\}$, would violate 2-vertex-connectivity because removing $i$ would isolate $v$. Hence at least two edges are necessary. To verify that $G^+$ with $E_{\text{add}}$ indeed preserves 2-vertex-connectedness, remove any vertex $x\in V^+$. If $x=v$, the remaining graph is $G$, which is connected. If $x\in V$ and $x\neq i,j$, then $G-x$ contains a connected subgraph with $i,j$; since $v$ connects to both, $G^+-x$ remains connected. If $x=i$ (or symmetrically $x=j$), $G-i$ is connected and contains $j$; with $v$ connected to $j$, $G^+-i$ stays connected. Thus $G^+$ is 2-vertex-connected.

To establish minimality of $|E_{\text{mod}}|$, suppose $|E_{\text{mod}}| = 0$. Then $L^+ = \mathcal{E}(L) + \bar{\Delta}$ with $\bar{\Delta} \neq 0$ only affects rows and columns of $v$. Since $L^+$ and $\mathcal{E}(L)$ are symmetric, $\bar{\Delta}$ is symmetric. For any $p\in\varPi(\tilde{p}^+,R)$, $L^+p = \mathcal{E}(L)p = 0$, hence $\bar{\Delta}p=0$. Choose $p = \mathbf{1}_{n+1}\otimes\tau$ with $\tau\neq0$. Then $\bar{\Delta}_{iv}\tau = 0$ for all $\tau$ and any $i \in V$, forcing $\bar{\Delta}_{iv}=0$. By symmetry, $\bar{\Delta}_{vi}=0$ for any $i \in V$, thus $\bar{\Delta}=0$, contradicting $\bar{\Delta}\neq0$. Therefore $|E_{\text{mod}}|\ge1$. Our construction achieves $|E_{\text{mod}}|=1$, which is minimal.
\end{pf}

\bibliographystyle{plain}        
\bibliography{references}           


\end{document}